\newtheorem{theorem}{Theorem}
\newtheorem{lemma}{Lemma}
\newtheorem{proposition}{Proposition}
\newtheorem{assumption}{Assumption}
\newtheorem{Remark}{Remark}
\DeclareMathOperator{\vect}{vec}
\newcommand{\distas}[1]{\mathbin{\overset{#1}{\kern\z@\sim}}}%
\newsavebox{\mybox}\newsavebox{\mysim}
\newcommand{\distras}[1]{%
  \savebox{\mybox}{\hbox{\kern3pt$\scriptstyle#1$\kern3pt}}%
  \savebox{\mysim}{\hbox{$\sim$}}%
  \mathbin{\overset{#1}{\kern\z@\resizebox{\wd\mybox}{\ht\mysim}{$\sim$}}}%
}
\title{\LARGE \bf
Finite Sample Guarantees for Distributed Online Parameter Estimation with Communication Costs
}
\author{~Lei~Xin, George Chiu, Shreyas Sundaram 
\thanks{This research was supported by USDA grant 2018-67007-28439.  This work represents the opinions of the authors and not the USDA or NIFA. Lei Xin and Shreyas Sundaram are with the Elmore Family School of Electrical and Computer Engineering, Purdue University. George Chiu is with the School of Mechanical Engineering, Purdue University. E-mails: {\tt\{lxin, gchiu, sundara2\}@purdue.edu}.
}
}
\begin{document}

\maketitle
\thispagestyle{empty}
\pagestyle{empty}

\begin{abstract}
We study the problem of estimating an unknown parameter in a distributed and online manner. Existing work on distributed online learning typically either focuses on asymptotic analysis, or provides bounds on regret. However, these results may not directly translate into bounds on the error of the learned model after a finite number of time-steps. In this paper, we propose a distributed online estimation algorithm which enables each agent in a network to improve its estimation accuracy by communicating with neighbors. We provide non-asymptotic bounds on the estimation error, leveraging the statistical properties of the underlying model. Our analysis demonstrates a trade-off between estimation error and communication costs. Further, our analysis allows us to determine a time at which the communication can be stopped (due to the costs associated with communications), while meeting a desired estimation accuracy. We also provide a numerical example to validate our results.
\end{abstract}


\section{Introduction} \label{sec: introduction}
Learning an accurate model from data is an important problem in many fields \cite{abu2012learning}, including machine learning, economics, and control theory. However, in many cases, the available datasets are usually split among multiple agents/learners and come in a streaming manner, which require online processing. Coordination among the various agents to process their data also comes with a communication cost, and thus algorithms must be designed to balance the amount of communication with the speed and accuracy of learning.   

The problem of distributed learning/optimization has been studied extensively over the last few decades \cite{nedic2010constrained,zhu2011distributed}. These papers typically provide theoretical guarantees on the convergence of local solutions to the optimizer of the sum of local functions over the network. When it comes to distributed online parameter estimation, the existing literature typically focuses on proving asymptotic convergence of the estimate to the true value \cite{kar2012distributed,zhang2012distributed}. There is another branch of research on distributed online learning that focuses on providing bounds on regret, which is defined as the difference between the costs generated by the sequence of local decisions and the true optimal costs obtained in hindsight \cite{hosseini2013online,hosseini2016online, yuan2020distributed}. The bound on regret can be used as an appropriate metric to evaluate a proposed algorithm, as a sublinear regret implies that the algorithm performs as well as its centralized counterpart on average (over time). However, it is unclear how such bounds can be translated into the bounds on the accuracy of the learned model after a finite number of time-steps. The paper \cite{su2019finite} studies distributed state estimation problem with finite time convergence guarantee with a fixed observation matrix, and under Byzantine faults. In contrast, we consider the problem where the observation/feature matrix is random, which is often encountered in general machine learning problems.  

In this paper, we propose a distributed online parameter estimation algorithm in a networked setting, which enables each agent to improve its estimation accuracy by communicating with neighbors in the network. Our algorithm can be viewed as an extension of the distributed least squares method in \cite{tron2011distributed} to an online setting. In our algorithm, each agent stores two estimates of the true parameter: one computed purely based on local data and one computed after communicating with neighbors in the network. We provide finite time (or sample) upper bounds on the estimation errors of both of these two estimates, which highlight the role of communication. Our results demonstrate a trade-off between estimation error and communication costs. To balance such a trade-off, we discuss how we can leverage our finite time error bounds to determine a time at which the communication can be stopped (due to the costs associated with maintaining communications), while meeting a desired estimation accuracy. We also provide a numerical example to validate our results. 

\section{Notation}
Vectors are taken to be column vectors unless indicated otherwise. Let $\mathbb{R}$ and $\mathbb{Z}$ denote the set of real numbers and integers, respectively. Let $\mathbf{1}_{n}$ denote a vector of dimension $n$ with all of its elements equal to 1. Let $\sigma_{min}(\cdot)$ and $\lambda_{min}(\cdot)$ be the smallest singular value and eigenvalue in magnitude, respectively, of a given matrix. The eigenvalues of a given matrix are ordered with
nonincreasing magnitude, i.e., $|\lambda_{1}(\cdot)|\geq \cdots\geq |\lambda_{min}(\cdot)|$. For a given matrix $A$, we use $A(i,j)$ to denote the element in its $i$-th row and $j$-th column, $A^{*}$ to denote its conjugate transpose, $A^\dagger$ to denote its pseudoinverse, and $\vect(A)$ to denote its vectorization (i.e., the vector obtained by stacking the columns of $A$ starting from the left). We use $\|A\|$, $\|A\|_{1}$ and $\|A\|_{F}$ to denote the spectral norm, $1$-norm. and Frobenius norm, respectively, of matrix $A$. We use $I_{n}$ to denote the identity matrix with dimension $n$. An $n$-dimensional Gaussian distributed random vector is denoted as $u\sim \mathcal{N}(\mu,\sigma^{2} I_{n})$, where $\mu$ is the mean and $\sigma^{2} I_{n}$ is the covariance matrix. The symbol $\cap$ is used to denote the intersection of sets. We use the symbol $\bmod$ to denote the modulo operation. 

\section{Problem formulation} \label{sec: problem formulation}

Consider a group of $m$ agents $\mathcal{V}$ interconnected over an undirected and connected  graph $\mathcal{G} = (\mathcal{V}, \mathcal{E})$. An edge $(i,j)\in \mathcal{E}$ is an unordered pair, which indicates a bidirectional communication link between agents $i$ and $j$. Let $\mathcal{N}_{i}\triangleq \{j:(i,j)\in \mathcal{E}\}$ be the set of neighbors of agent $i$. The goal of these agents is to collaboratively estimate an unknown parameter $\Theta\in \mathbb{R}^{l\times n}$ with finite time guarantees, under a {\it finite number of communication steps}. At each time step $t=1,2,\cdots$, each agent $i\in\mathcal{V}$ gathers the data pair $(x_{i,t},y_{i,t})$ generated by the following model
\begin{equation}
\begin{aligned} 
y_{i,t}=\Theta x_{i,t}+\eta_{i,t}, \\
\end{aligned}
\label{eq:True system}
\end{equation}
where $y_{i,t} \in \mathbb{R}^{l}$ is the label vector, $x_{i,t}\in \mathbb{R}^{n}$ is the feature vector, and $\eta_{i,t}\in\mathbb{R}^{l}$ is the noise. We make the following assumption. 

\begin{assumption} \label{ass:distribution}
The feature vector $x_{i,t}$ and noise $\eta_{i,t}$ are Gaussian random vectors that are independent over time and agents, where $x_{i,t} \sim \mathcal{N}(\mu_{i,t},\sigma_{x}^2I_{n})$ and $\eta_{i,t} \sim \mathcal{N}(0,\sigma_{\eta}^2I_{l})$. The mean $\mu_{i,t}\in\mathbb{R}^{n}$ is deterministic with $\sup\{\|\mu_{i,t}\|:i\in\mathcal{V}, t\in\mathbb{Z}_{\ge 1} \}=\hat{\mu}\in\mathbb{R}_{\geq0}$.
\end{assumption}

The above model can be used to capture many problems. For example, it can be used to capture the problem of dynamical system identification via multiple independent trajectories (assuming zero initial condition and without process noise), where $y_{i,t}$ is the output of the system in each trajectory, $x_{i,t}$ is the input applied in each trajectory, and $\Theta$ is the Markov parameter matrix of the system, e.g., \cite{sun2020finite}. We note that the $x_{i,t}$ considered in our model allows for time-varying and agent-dependent mean $\mu_{i,t}$, and hence is more general than the analogous system identification problem, which typically considers zero-mean Gaussian inputs. We also note that our algorithm does not require any parameters of the model to be known in advance. However, we assume that there are known upper bounds on $\sigma_{x},\sigma_{\eta}, \hat{\mu}, \|\Theta\|$, and there is a known non-zero lower bound on $\sigma_{x}$. These bounds will facilitate the design of certain user-specified parameters in our algorithm, which will become clear when we present our results.   

\begin{Remark}
One may observe that a trivial solution to the above problem might be to not communicate at all, i.e., each agent only updates based on its local dataset. However, such a solution does not leverage the distributed nature of the problem, which provides each agent with the  potential to speed up the learning by communicating with the other agents in the network. On the other hand, communications with the other agents should be carefully designed, as information from others might become less useful when each agent already has a good estimate based on the information it has so far. In the sequel, we study a distributed algorithm that leverages the communication network, which allows all agents to learn the model efficiently (when some upper/lower bounds on $\sigma_{x},\sigma_{\eta}, \hat{\mu}, \|\Theta\|$ are available). More specifically, the algorithm allows every agent to hold an estimate with an estimation error comparable to that of the centralized solution throughout time, while saving communication costs. 
\end{Remark}

\section{A Distributed Online Estimation Algorithm} \label{sec:algorithm}
In this section, we describe a two-time-scale distributed algorithm. At each time step $t=1,2,\cdots$, based on its local dataset, each agent $i\in \mathcal{V}$ wishes to solve the following least squares problem:
\begin{equation} \label{local objective}
\begin{aligned}
    \mathop{\min}_{\Tilde{\Theta}\in \mathbb{R}^{l\times n}} \sum_{j=1}^{t}\|y_{i,j}-\Tilde{\Theta} x_{i,j}\|^{2}_{F}.
\end{aligned}
\end{equation}
The least squares local estimate for agent $i$, given its samples collected up to time step $t$, is 
\begin{equation}   \label{optimal_local_solution}
\begin{aligned}
&\hat{\Theta}_{i,t+1}=(\sum_{j=1}^{t}y_{i,j}x_{i,j}^{*})(\sum_{j=1}^{t}x_{i,j}x_{i,j}^{*})^{-1},
\end{aligned}
\end{equation}
assuming the matrix $\sum_{j=1}^{t}x_{i,j}x_{i,j}^{*}$ is invertible.

The above estimate can be updated iteratively with the arrival of new data pair $(x_{i,t},y_{i,t})$, through
\begin{equation} \label{local update}
\begin{aligned}
&\alpha_{i,t+1}=\alpha_{i,t}+y_{i,t}x_{i,t}^{*},\\
&\beta_{i,t+1}=\beta_{i,t}+x_{i,t}x_{i,t}^{*}, \\
&\hat{\Theta}_{i,t+1}=\alpha_{i,t+1}\beta_{i,t+1}^{\dagger },\\
\end{aligned}
\end{equation}
where $\alpha_{i,1}=0,\beta_{i,1}=0$. Note that $\beta_{i,t+1}^{\dagger}=\beta_{i,t+1}^{-1}$ once $\beta_{i,t+1}$ becomes invertible. Also, $\beta_{i,t+1}^{-1}$ can be updated iteratively using the Sherman-Morrison formula \cite{henderson1981deriving}, which states $\beta_{i,t+1}^{-1}=\beta_{i,t}^{-1}-\frac{\beta_{i,t}^{-1}x_{i,t}x_{i,t}^{*}\beta_{i,t}^{-1}}{1+x_{i,t}^{*}\beta_{i,t}^{-1}x_{i,t}}$.

The algorithm enters the communication phase when the conditions $t\bmod \zeta=0$ and $t\leq S$ are satisfied, where $\zeta\in \mathbb{Z}_{\ge 1}$ and $S\in  \mathbb{Z}_{\ge 0}$, i.e., when the current time step $t$ is an integer multiple of the pre-specified communication period $\zeta$ and is less than the pre-specified stopping time $S$. Letting the superscript $k$ denote the current communication time step, each agent $i\in\mathcal{V}$ sets $\alpha_{i,t+1}^{0}=\alpha_{i,t+1},\beta_{i,t+1}^{0}=\beta_{i,t+1}$. At each communication time step $k$, each agent $i\in\mathcal{V}$ broadcasts its current $\alpha_{i,t+1}^{k}$ and $\beta_{i,t+1}^{k}$ to its neighbors $j\in \mathcal{N}_{i}$, and receives $\alpha_{j,t+1}^{k}$ and $\beta_{j,t+1}^{k}$ from $j\in\mathcal{N}_{i}$. The update is given by
\begin{equation} \label{communication}
\begin{aligned}
&\alpha_{i,t+1}^{k+1}=W(i,i)\alpha_{i,t+1}^{k}+\sum_{j\in\mathcal{N}_{i}}W(i,j)\alpha_{j,t+1}^{k},\\
&\beta_{i,t+1}^{k+1}=W(i,i)\beta_{i,t+1}^{k}+\sum_{j\in\mathcal{N}_{i}}W(i,j)\beta_{j,t+1}^{k},\\
\end{aligned}
\end{equation}
for $k=0, 1, \ldots, T-1$, where $T\in \mathbb{Z}_{\geq 1}$ is the number of pre-specified total communication steps whenever the algorithm enters the communication phase, and $W\in \mathbb{R}^{m\times m}$ is the matrix where $W(i,j)$ is the weight agent $i\in \mathcal{V}$ assigns to agent $j\in \mathcal{V}$. We make the following assumption on $W$, which is commonly used, e.g., \cite{ye2020distributed}.
\begin{assumption} \label{ass:topology}
The weight matrix $W\in \mathbb{R}^{m\times m}$ associated
with the communication graph $G = ( \mathcal{V}, \mathcal{E})$ is assumed to satisfy: (1) $W(i,j)\in \mathbb{R}$ and  $W(i,j)\geq 0$ for
all $i,j \in \mathcal{V}$, and $W(i,j)=0$ if $j \not\in \mathcal{N}_{i}$ and $i \neq j$; (2)$W\mathbf{1}_{m}=\mathbf{1}_{m}$; (3) $W=W^{*}$ and (4) $\rho(W)\triangleq\max\{\lambda_{2}(W), -\lambda_{m}(W)\}< 1$.
\end{assumption}

The local estimate after communication is set to be $\bar{\Theta}_{i,t+1}=\alpha_{i,t+1}^{T}(\beta_{i,t+1}^{T})^{\dagger}$. If there is no communication happened at the current time step $t$, agent $i$ just keeps its estimate from the previous time-step, i.e., $\bar{\Theta}_{i,t+1}=\bar{\Theta}_{i,t}$.

The above steps are encapsulated in Algorithm \ref{algo1}.

\begin{algorithm}
\caption{Distributed Online Estimation Algorithm}
\textbf{Input} Weight matrix $W$, stopping time $S$, communication period $\zeta$, number of communication steps $T$
\label{algo1}
\begin{algorithmic}[1]
\State Each $v_i \in \mathcal{V}$ initializes $\alpha_{i,1}=0,\beta_{i,1}=0, \bar{\Theta}_{i,1}=0$
\For {$t = 1, 2, 3, \ldots$}  
\For {$v_i \in \mathcal V$}  \Comment{Implement in parallel}
\State Gather the data pair $(x_{i,t},y_{i,t})$, where $x_{i,t} \sim \mathcal{N}(\mu_{i,t},\sigma_{x}^{2}I_{n})$
\State Update 
 $\alpha_{i,t+1}, \beta_{i,t+1}, \hat{\Theta}_{i,t+1}$ as in \eqref{local update}
\If{$t\bmod{\zeta}=0$ and $t\leq S$}
\State Set $\alpha_{i,t+1}^{0}=\alpha_{i,t+1},\beta_{i,t+1}^{0}=\beta_{i,t+1}$ 
\For {$k=0, 1, \ldots, T-1$} 
\State Broadcast $\alpha_{i,t+1}^{k}$ $\beta_{i,t+1}^{k}$ to $j\in\mathcal{N}_{i}$, and receive $\alpha_{j,t+1}^{k}$ $\beta_{j,t+1}^{k}$ from $j\in\mathcal{N}_{i}$
\State Update $\alpha_{i,t+1}^{k+1}, \beta_{i,t+1}^{k+1}$ as in \eqref{communication}
\EndFor
\State $\bar{\Theta}_{i,t+1}=\alpha_{i,t+1}^{T}(\beta_{i,t+1}^{T})^{\dagger}$
\Else
\State $\bar{\Theta}_{i,t+1}=\bar{\Theta}_{i,t}$
\EndIf
\EndFor
\EndFor
\end{algorithmic}
\end{algorithm}

\begin{Remark}
Note that Algorithm \ref{algo1} has two time scales. In practice, this captures the scenario where agents can communication multiple times between receiving data samples. Further, note that both $\hat{\Theta}_{i,t+1}$ (without communication) and $\bar{\Theta}_{i,t+1}$ (after communication) are estimates of the true parameter $\Theta$. In the next section, we will provide bounds on the finite time estimation errors $\|\hat{\Theta}_{i,t+1}-\Theta\|$ and $\|\bar{\Theta}_{i,t+1}-\Theta\|$. In practice, one could choose the estimate with smaller (estimated) error bound as the ``true" output of the algorithm. In section \ref{discussion}, we will discuss how to choose the user-specified parameters $\zeta$, $S$ and $T$ to enable efficient learning.
\end{Remark}

\section{Analysis of the Error}
\subsection{Local Estimation Error Without Communication}
We will start with bounding the estimation error using only local samples. Note that for any agent $i\in \mathcal{V}$, we have
\begin{equation} \label{error expression}
\begin{aligned}
\|\hat{\Theta}_{i,t+1}-\Theta\|&=\|\alpha_{i,t+1}\beta_{i,t+1}^{-1}-\Theta\|\\
&=\|(\sum_{j=1}^{t}y_{i,j}x_{i,j}^{*})(\sum_{j=1}^{t}x_{i,j}x_{i,j}^{*})^{-1}-\Theta\|\\
&=\|(\sum_{j=1}^{t}\eta_{i,j}x_{i,j}^{*})(\sum_{j=1}^{t}x_{i,j}x_{i,j}^{*})^{-1}\|\\
&\leq\|\sum_{j=1}^{t}\eta_{i,j}x_{i,j}^{*}\|\|(\sum_{j=1}^{t}x_{i,j}x_{i,j}^{*})^{-1}\|,
\end{aligned}
\end{equation}
assuming the the matrix $\sum_{j=1}^{t}x_{i,j}x_{i,j}^{*}$ is invertible. The proof of our error bound follows by upper bounding the above terms separately.

We will rely on the following lemma from \cite[Corollary~5.35]{vershynin2010introduction}, which provides non-asymptotic lower bound and upper bound of a standard Wishart matrix.
\begin{lemma} Let $u_{i}\sim \mathcal{N}(0,I_{n})$, $i=1,\ldots,t$ be i.i.d random vectors. For any fixed $\delta >0$, with probability at least $1-\delta$, we have both of the following inequalities:
\begin{equation*}
   \sqrt{\lambda_{1}(\sum_{i=1}^{t}u_{i}u_{i}^{*}})\leq \sqrt{t}+\sqrt{n}+\sqrt{2\log{\frac{2}{\delta}}},
\end{equation*}
\begin{equation*}
   \sqrt{\lambda_{min}(\sum_{i=1}^{t}u_{i}u_{i}^{*}})\geq \sqrt{t}-\sqrt{n}-\sqrt{2\log{\frac{2}{\delta}}}.
\end{equation*} 

\label{lemma:Bound of unit variance gaussian}
\end{lemma}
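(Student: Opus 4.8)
The plan is to reduce the statement to the standard bound on the extreme singular values of a rectangular Gaussian matrix (Corollary~5.35 in \cite{vershynin2010introduction}, which is precisely the cited result), via a change of viewpoint from outer-product sums to a matrix. First I would collect the vectors $u_{1},\dots,u_{t}$ as the rows of a matrix $U\in\mathbb{R}^{t\times n}$, i.e., the $i$-th row of $U$ is $u_{i}^{*}$. Since the $u_{i}$ are i.i.d.\ $\mathcal{N}(0,I_{n})$, the matrix $U$ has i.i.d.\ $\mathcal{N}(0,1)$ entries, and moreover $U^{*}U=\sum_{i=1}^{t}u_{i}u_{i}^{*}$. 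Consequently the singular values of $U$ satisfy $s_{\max}(U)^{2}=\lambda_{1}(\sum_{i=1}^{t}u_{i}u_{i}^{*})$ and, when $t\geq n$, $s_{\min}(U)^{2}=\lambda_{min}(\sum_{i=1}^{t}u_{i}u_{i}^{*})$; when $t<n$ the lower-bound inequality is trivial because its left-hand side is nonpositive, so there is no loss in assuming $t\geq n$.

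Next I would invoke the cited corollary, which asserts that for an $N\times n$ matrix with i.i.d.\ standard Gaussian entries and every $\tau\geq 0$, with probability at least $1-2e^{-\tau^{2}/2}$ one has $\sqrt{N}-\sqrt{n}-\tau\leq s_{\min}\leq s_{\max}\leq\sqrt{N}+\sqrt{n}+\tau$. Specializing to $N=t$ and choosing $\tau$ so that $2e^{-\tau^{2}/2}=\delta$, i.e.\ $\tau=\sqrt{2\log{\frac{2}{\delta}}}$, gives that with probability at least $1-\delta$,
\[
\sqrt{t}-\sqrt{n}-\sqrt{2\log{\tfrac{2}{\delta}}}\ \leq\ s_{\min}(U)\ \leq\ s_{\max}(U)\ \leq\ \sqrt{t}+\sqrt{n}+\sqrt{2\log{\tfrac{2}{\delta}}}.
\]
Substituting the identities $s_{\max}(U)=\sqrt{\lambda_{1}(\sum u_{i}u_{i}^{*})}$ and $s_{\min}(U)=\sqrt{\lambda_{min}(\sum u_{i}u_{i}^{*})}$ then yields the two claimed inequalities simultaneously on the same event.

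If a self-contained argument were wanted rather than a direct citation, I would reconstruct the corollary from two ingredients: (i) Gaussian concentration of measure, using that $U\mapsto s_{\max}(U)$ and $U\mapsto s_{\min}(U)$ are $1$-Lipschitz with respect to the Frobenius norm, which gives sub-Gaussian deviations of each singular value around its mean at scale $\tau$; and (ii) the expectation bounds $\mathbb{E}\,s_{\max}(U)\leq\sqrt{t}+\sqrt{n}$ and $\mathbb{E}\,s_{\min}(U)\geq\sqrt{t}-\sqrt{n}$, which follow from Gordon's (Slepian-type) Gaussian comparison inequalities. Combining (i) and (ii) with a union bound over the two tail events reproduces the two-sided statement; this is exactly the route taken in \cite{vershynin2010introduction}.

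I do not expect a genuine obstacle here: the entire argument is bookkeeping — matching the $t\times n$ shape of $U$, noting that $\lambda_{min}$ denotes the smallest eigenvalue of the positive semidefinite matrix $\sum u_{i}u_{i}^{*}$, and inverting the Gaussian tail to express $\tau$ through $\delta$. The only point that deserves an explicit sentence is the degenerate regime $t<n$, handled as indicated above so that the lower bound remains valid (indeed vacuous) there as well.
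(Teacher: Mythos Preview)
Your proposal is correct and matches the paper's approach exactly: the paper does not prove this lemma at all but simply cites \cite[Corollary~5.35]{vershynin2010introduction}, and your argument is precisely the unpacking of that citation (stack the $u_i$ as rows of a $t\times n$ standard Gaussian matrix $U$, identify $U^{*}U=\sum_i u_iu_i^{*}$, and set $\tau=\sqrt{2\log(2/\delta)}$). One tiny slip: in the $t<n$ case you wrote that the lower-bound inequality is trivial because its ``left-hand side'' is nonpositive, but you mean the right-hand side $\sqrt{t}-\sqrt{n}-\sqrt{2\log(2/\delta)}$; the reasoning is otherwise fine.
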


We will use the following lemma from \cite[Lemma~A.1]{oymak2018non}.
\begin{lemma} \label{lemma:gayssian times constant}
Let $M\in\mathbb{R}^{m\times n}$ be a matrix with $m\geq n$, and let $\eta\in\mathbb{R}$ be such that $\|M\|\leq \eta$. Let $Z\in\mathbb{R}^{m\times k}$ be a matrix with independent standard normal entries. Then, for any fixed $\delta>0$, with probability at least $1-\delta$, 
\begin{equation*}
  \|M^{*}Z\|\leq \eta (\sqrt{2(n+k)}+\sqrt{2\log{\frac{2}{\delta}}}).
\end{equation*}
\end{lemma}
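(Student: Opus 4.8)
The plan is to reduce the estimate to a bound on the operator norm of a matrix with i.i.d.\ standard normal entries, which is precisely what Lemma~\ref{lemma:Bound of unit variance gaussian} delivers. First I would take a thin singular value decomposition $M = U\Sigma V^{*}$, where $U\in\mathbb{R}^{m\times n}$ has orthonormal columns, $\Sigma\in\mathbb{R}^{n\times n}$ is diagonal with the singular values of $M$ on the diagonal, and $V\in\mathbb{R}^{n\times n}$ is orthogonal (here the hypothesis $m\geq n$ is used). Then $M^{*}Z = V\Sigma(U^{*}Z)$, and since left-multiplication by the orthogonal matrix $V$ preserves the spectral norm and the norm is submultiplicative, $\|M^{*}Z\| = \|\Sigma(U^{*}Z)\| \leq \|\Sigma\|\,\|U^{*}Z\| = \|M\|\,\|U^{*}Z\| \leq \eta\,\|U^{*}Z\|$. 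This is a purely deterministic step.

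The key observation is that $G := U^{*}Z \in \mathbb{R}^{n\times k}$ again has i.i.d.\ standard normal entries: each column of $Z$ is an independent $\mathcal{N}(0,I_{m})$ vector, so the corresponding column of $G$ is distributed as $\mathcal{N}(0,U^{*}U) = \mathcal{N}(0,I_{n})$, these columns stay independent, and within each column the identity covariance makes the entries i.i.d.\ $\mathcal{N}(0,1)$. Writing $g_{1},\dots,g_{k}$ for the columns of $G$, we have $\|G\|^{2} = \lambda_{1}(GG^{*}) = \lambda_{1}\bigl(\sum_{i=1}^{k} g_{i}g_{i}^{*}\bigr)$, so applying the upper-bound part of Lemma~\ref{lemma:Bound of unit variance gaussian} with the $k$ vectors $g_{1},\dots,g_{k}\in\mathbb{R}^{n}$ gives, with probability at least $1-\delta$, $\|G\| \leq \sqrt{k} + \sqrt{n} + \sqrt{2\log(2/\delta)}$. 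Combining with $\sqrt{n}+\sqrt{k} \leq \sqrt{2(n+k)}$ (which follows from $2\sqrt{nk}\leq n+k$) and the deterministic inequality $\|M^{*}Z\| \leq \eta\|G\|$ from the first paragraph yields the stated bound.

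I expect the only genuinely delicate point to be the distributional claim that $U^{*}Z$ has i.i.d.\ standard normal entries, i.e.\ correctly invoking rotation invariance of Gaussian vectors for a matrix $U$ with orthonormal but rectangular columns, together with keeping straight which dimension plays the role of ``$t$'' and which the role of ``$n$'' when invoking Lemma~\ref{lemma:Bound of unit variance gaussian}. An alternative would be a direct $\varepsilon$-net argument over $S^{n-1}\times S^{k-1}$ applied to the Gaussian process $(u,v)\mapsto\langle Mu,Zv\rangle$, but the SVD reduction is shorter and reuses the already-available lemma verbatim, so that is the route I would take.
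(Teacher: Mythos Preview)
Your argument is correct. Note, however, that the paper does not prove this lemma itself; it simply quotes it from \cite[Lemma~A.1]{oymak2018non} without proof. Your SVD reduction followed by rotation invariance of the Gaussian and an appeal to Lemma~\ref{lemma:Bound of unit variance gaussian} is exactly the standard route (and is in fact how the cited reference proves it), so there is nothing to contrast. The only cosmetic remark is that your invocation of Lemma~\ref{lemma:Bound of unit variance gaussian} uses the roles $t\leftarrow k$, $n\leftarrow n$, and then the inequality $\sqrt{n}+\sqrt{k}\le\sqrt{2(n+k)}$ to absorb the two square roots into the stated form; you have this right.
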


We have the following result.
\begin{lemma} \label{local x}
Let Assumption \ref{ass:distribution} hold. Fix $\delta >0$ and let $t\geq \max({t_{1},t_{2}})$, where $t_{1}= 8n+16\log\frac{2}{\delta}, t_{2}=(\frac{16\hat{\mu}(\sqrt{4n}+\sqrt{2\log\frac{2}{\delta}})}{\sigma_{x}})^{2}$. For any $i\in \mathcal{V}$, letting $\bar{\mu}_{i,t}=\frac{4}{t\sigma_{x}^{2}}\sum_{j=1}^{t}\mu_{i,j}\mu_{i,j}^{*}$, with probability at least $1-2\delta$, we have both of the following inequalities:
\begin{equation*}
\begin{aligned}
\|\sum_{j=1}^{t}x_{i,j}x_{i,j}^{*}\|&\leq t(\frac{19}{8}\sigma_{x}^{2}+\hat{\mu}^{2}),\\
\lambda_{min}(\sum_{j=1}^{t}x_{i,j}x_{i,j}^{*})&\geq \frac{t\sigma_{x}^{2}}{8}\lambda_{min}(I_{n}+\bar{\mu}_{i,t}).
\end{aligned}
\end{equation*}
\end{lemma}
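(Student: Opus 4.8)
The plan is to decompose each feature vector as $x_{i,j}=\mu_{i,j}+\sigma_{x}u_{i,j}$ with $u_{i,j}\sim\mathcal{N}(0,I_{n})$ i.i.d.\ across $j$, so that
\[
\sum_{j=1}^{t}x_{i,j}x_{i,j}^{*}=P+\sigma_{x}(Q+Q^{*})+\sigma_{x}^{2}R,
\]
where $P=\sum_{j=1}^{t}\mu_{i,j}\mu_{i,j}^{*}$ is deterministic and positive semidefinite, $R=\sum_{j=1}^{t}u_{i,j}u_{i,j}^{*}$ is a standard Wishart matrix, and $Q=\sum_{j=1}^{t}\mu_{i,j}u_{i,j}^{*}$ is the cross term. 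I would bound these three pieces separately and then recombine, using the triangle inequality for the spectral norm and Weyl's inequality ($\lambda_{min}(A+B)\geq\lambda_{min}(A)+\lambda_{min}(B)$ for symmetric $A,B$) for the smallest eigenvalue.

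For the Wishart term, apply Lemma \ref{lemma:Bound of unit variance gaussian}. The threshold $t\geq t_{1}$ is precisely what guarantees $\sqrt{n}+\sqrt{2\log\frac{2}{\delta}}\leq\frac{1}{2}\sqrt{t}$ (via $(\sqrt{a}+\sqrt{b})^{2}\leq 2(a+b)$), which sharpens the two-sided estimate of that lemma to $\frac{t}{4}I_{n}\preceq R$ and $\|R\|\leq\frac{9}{4}t$, each holding with probability at least $1-\delta$. For the cross term, write $Q=\tilde{M}^{*}\tilde{Z}$ with $\tilde{M}=[\mu_{i,1}\ \cdots\ \mu_{i,t}]^{*}\in\mathbb{R}^{t\times n}$ and $\tilde{Z}=[u_{i,1}\ \cdots\ u_{i,t}]^{*}\in\mathbb{R}^{t\times n}$ having i.i.d.\ standard normal entries; since $\|\tilde{M}\|\leq\|\tilde{M}\|_{F}\leq\sqrt{t}\,\hat{\mu}$ and $t\geq n$, Lemma \ref{lemma:gayssian times constant} (with $m=t$, $k=n$) yields $\|Q\|\leq\sqrt{t}\,\hat{\mu}\,(\sqrt{4n}+\sqrt{2\log\frac{2}{\delta}})$ with probability at least $1-\delta$, and the threshold $t\geq t_{2}$ then forces $2\sigma_{x}\|Q\|\leq\frac{t\sigma_{x}^{2}}{8}$. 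A union bound over these (at most) two failure events gives probability at least $1-2\delta$ that all the bounds hold simultaneously.

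Granting these, the upper bound follows from $\|\sum_{j=1}^{t}x_{i,j}x_{i,j}^{*}\|\leq\|P\|+2\sigma_{x}\|Q\|+\sigma_{x}^{2}\|R\|$ together with $\|P\|\leq\sum_{j=1}^{t}\|\mu_{i,j}\|^{2}\leq t\hat{\mu}^{2}$, which sums to $t\hat{\mu}^{2}+\frac{t\sigma_{x}^{2}}{8}+\frac{9t\sigma_{x}^{2}}{4}=t(\frac{19}{8}\sigma_{x}^{2}+\hat{\mu}^{2})$. For the lower bound, Weyl's inequality gives $\lambda_{min}(\sum_{j=1}^{t}x_{i,j}x_{i,j}^{*})\geq\lambda_{min}(P+\sigma_{x}^{2}R)-\sigma_{x}\|Q+Q^{*}\|\geq\lambda_{min}(P+\sigma_{x}^{2}R)-2\sigma_{x}\|Q\|$; since $\sigma_{x}^{2}R\succeq\frac{t\sigma_{x}^{2}}{4}I_{n}$ we get $P+\sigma_{x}^{2}R\succeq\frac{t\sigma_{x}^{2}}{4}(I_{n}+\bar{\mu}_{i,t})$, hence $\lambda_{min}(P+\sigma_{x}^{2}R)\geq\frac{t\sigma_{x}^{2}}{4}\lambda_{min}(I_{n}+\bar{\mu}_{i,t})$. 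Subtracting $\frac{t\sigma_{x}^{2}}{8}$ and using $\lambda_{min}(I_{n}+\bar{\mu}_{i,t})\geq 1$ to absorb that term leaves exactly $\frac{t\sigma_{x}^{2}}{8}\lambda_{min}(I_{n}+\bar{\mu}_{i,t})$.

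The main obstacle is the cross term $\sigma_{x}(Q+Q^{*})$: one must resist bounding $\lambda_{min}$ of the full sum by adding $\lambda_{min}(\sigma_{x}(Q+Q^{*}))$ to $\lambda_{min}(P)+\lambda_{min}(\sigma_{x}^{2}R)$ (which loses the coupling that makes the mean term reappear as $I_{n}+\bar{\mu}_{i,t}$), and instead keep $P+\sigma_{x}^{2}R$ together while controlling $Q+Q^{*}$ only through its operator norm. Making the constants land exactly ($\frac{1}{8}$, $\frac{9}{4}$, $\frac{19}{8}$) relies on the precise form of $t_{1}$ and $t_{2}$, so some care is needed to verify that the slack in those thresholds really produces the clean factor $\frac{1}{2}$ used above.
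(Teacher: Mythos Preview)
Your proposal is correct and follows essentially the same route as the paper: the same decomposition $x_{i,j}=\sigma_x u_{i,j}+\mu_{i,j}$, the same use of Lemma~\ref{lemma:Bound of unit variance gaussian} for the Wishart part (with the thresholds $t\geq t_1$ producing the $\tfrac12\sqrt{t}$ slack) and Lemma~\ref{lemma:gayssian times constant} for the cross term, the same union-bound accounting, and the same Weyl-type step that keeps $P+\sigma_x^2R$ together before subtracting $2\sigma_x\|Q\|$ and invoking $\lambda_{\min}(I_n+\bar{\mu}_{i,t})\geq 1$. The only cosmetic difference is that you make the use of $\lambda_{\min}(I_n+\bar{\mu}_{i,t})\geq 1$ explicit, whereas the paper leaves it implicit in the final inequality.
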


\begin{proof}
Fixing $i\in\mathcal{V}$, we can rewrite $x_{i,j}=\sigma_{x}u_{i,j}+\mu_{i,j}$, where $u_{i,j}\sim \mathcal{N}(0,I_{n})$ for $j=1,\ldots.t$. We have
\begin{equation} 
\begin{aligned} 
\sum_{j=1}^{t}x_{i,j}x_{i,j}^{*}&=\sum_{j=1}^{t}(\sigma_{x}u_{i,j}+\mu_{i,j})(\sigma_{x}u_{i,j}^{*}+\mu^{*}_{i,j})\\ 
&=\sum_{j=1}^{t}\sigma_{x}^{2}u_{i,j}u_{i,j}^{*}+\sum_{j=1}^{t}\mu_{i,j}\mu_{i,j}^{*}\\  
&+\sum_{j=1}^{t}\sigma_{x}u_{i,j}\mu_{i,j}^{*}+\sum_{j=1}^{t}\sigma_{x}\mu_{i,j} u_{i,j}^{*}. \label{4 terms}
\end{aligned}
\end{equation}

To derive the upper bound in Lemma \ref{local x}, we start with upper bounding the norm of the first term in the last equality of \eqref{4 terms}. Fixing $\delta>0$ and applying Lemma \ref{lemma:Bound of unit variance gaussian}, we have with probability at least $1-\delta$,
\begin{align} \label{event tmp}
  \sqrt{\lambda_{1}(\sum_{j=1}^{t}u_{i,j}u_{i,j}^{*}})\leq \sqrt{t}+\sqrt{n}+\sqrt{2\log{\frac{2}{\delta}}}.
\end{align}
Further, we have
\begin{equation*}
\begin{aligned}
  &\frac{1}{2}\sqrt{t}\geq \sqrt{n}+\sqrt{2\log{\frac{2}{\delta}}}\\
   \Longleftrightarrow\quad &\frac{t}{4}\geq(\sqrt{n}+\sqrt{2\log{\frac{2}{\delta}}})^{2}.
   \end{aligned}
\end{equation*}
Noting the inequality $2(a^2+b^2)\geq(a+b)^{2}$, we can write $2(n+2\log{\frac{2}{\delta}})\geq  (\sqrt{n}+\sqrt{2\log{\frac{2}{\delta}}})^2$. Letting $t\geq 8n+16\log{\frac{2}{\delta}}$, one can then show that the following holds with probability at least $1-\delta$:
\begin{equation*}
\begin{aligned}
 \sqrt{\lambda_{1}(\sum_{i=1}^{t}u_{i,j}u_{i,j}^{*}})&\leq \sqrt{t}+\sqrt{n}+\sqrt{2\log{\frac{2}{\delta}}}\leq\frac{3}{2}\sqrt{t}.
   \end{aligned}
\end{equation*}

Consequently, we have with probability at least $1-\delta$,
\begin{align} \label{bound 1}
 \|\sum_{i=1}^{t}\sigma_{x}^{2}u_{i,j}u_{i,j}^{*}\| &\leq\frac{9}{4}\sigma_{x}^{2}t.
   \end{align}

For the second term in the last equality of \eqref{4 terms}, from Assumption \ref{ass:distribution}, we have 
\begin{align} \label{bound 2}
\|\sum_{i=1}^{t}\mu_{i,j}\mu_{i,j}^{*}\|\leq t\hat{\mu}^{2}.
\end{align}

For the last two terms in the last equality of \eqref{4 terms}, since $t\geq n$, we have
\begin{align} 
\|\sum_{j=1}^{t}\sigma_{x}u_{i,j}\mu_{i,j}^{*}\|&=\|\sum_{j=1}^{t}\sigma_{x}\mu_{i,j} u_{i,j}^{*}\|\nonumber\\
&=\sigma_{x}\|\begin{bmatrix}\mu_{i,1} \cdots \mu_{i,t} \end{bmatrix}\begin{bmatrix}u_{i,1}\cdots u_{i,t}  \end{bmatrix}^{*} \|\nonumber\\
&\leq  \sigma_{x}\sqrt{t}\hat{\mu}(\sqrt{4n}+\sqrt{2\log\frac{2}{\delta}}), \label{bound 3}
\end{align}
with probability at least $1-\delta$, where the inequality comes from applying Lemma \ref{lemma:gayssian times constant} and the fact that $\|\begin{bmatrix}
\mu_{i,1}\cdots\mu_{i,t}\\
\end{bmatrix}\|\leq\sqrt{t}\hat{\mu}$.
Combining \eqref{bound 1}, \eqref{bound 2} and \eqref{bound 3} using a union bound, letting $t\geq \max\{8n+16\log{\frac{2}{\delta}},(\frac{16\hat{\mu}(\sqrt{4n}+\sqrt{2\log\frac{2}{\delta}})}{\sigma_{x}})^{2}\}$, we have with probability at least $1-2\delta$,
\begin{equation*}
\begin{aligned}
\|\sum_{j=1}^{t}x_{i,j}x_{i,j}^{*}\| &\leq t(\frac{9}{4}\sigma_{x}^{2}+\hat{\mu}^{2})+2\sigma_{x}\sqrt{t}\hat{\mu}(\sqrt{4n}+\sqrt{2\log\frac{2}{\delta}})\\
&\leq t(\frac{9}{4}\sigma_{x}^{2}+\hat{\mu}^{2}) +t\frac{\sigma_{x}^{2}}{8}= t(\frac{19}{8}\sigma_{x}^{2}+\hat{\mu}^{2}),
\end{aligned}
\end{equation*}
which is of the desired form.

Now we prove the lower bound in Lemma \ref{local x}.  We first lower bound the smallest eigenvalue of first term in the last equality of \eqref{4 terms}. Note that conditioning on the event in \eqref{event tmp}, we also have 
\begin{equation*}
   \sqrt{\lambda_{min}(\sum_{j=1}^{t}u_{i,j}u_{i,j}^{*}})\geq \sqrt{t}-\sqrt{n}-\sqrt{2\log{\frac{2}{\delta}}}
\end{equation*}
from Lemma \ref{lemma:Bound of unit variance gaussian}. Letting $t\geq 8n+16\log{\frac{2}{\delta}}$, we have $ \sqrt{\lambda_{min}(\sum_{j=1}^{t}u_{i,j}u^{*}_{i,j}})\geq \frac{1}{2}\sqrt{t}$, which implies
\begin{align} \label{event tmp2}
 \sum_{i=1}^{t}\sigma_{x}^{2}u_{i,j}u_{i,j}^{*} &\succeq\frac{1}{4}\sigma_{x}^{2}tI_{n}.
   \end{align}
Further, from \eqref{4 terms}, note that 
\begin{equation*}
\begin{aligned}
&\lambda_{min}(\sum_{j=1}^{t}x_{i,j}x_{i,j}^{*})\\
&\geq\lambda_{min}(\sum_{j=1}^{t}\sigma_{x}^{2}u_{i,j}u_{i,j}^{*}+\sum_{j=1}^{t}\mu_{i,j}\mu_{i,j}^{*})-2\|\sum_{j=1}^{t}\sigma_{x}u_{i,j}\mu_{i,j}^{*}\|,
\end{aligned}
\end{equation*}
where the inequality comes from \cite[Theorem~3.3.16(c)]{horn1991topics}.
Now, conditioning on the event in \eqref{bound 3} and the event in \eqref{event tmp2}, denoting $\bar{\mu}_{i,t}=\frac{4}{t\sigma_{x}^{2}}\sum_{j=1}^{t}\mu_{i,j}\mu_{i,j}^{*}$, we have
\begin{equation*}
\begin{aligned}
&\lambda_{min}(\sum_{j=1}^{t}\sigma_{x}^{2}u_{i,j}u_{i,j}^{*}+\sum_{j=1}^{t}\mu_{i,j}\mu_{i,j}^{*})-2\|\sum_{j=1}^{t}\sigma_{x}u_{i,j}\mu_{i,j}^{*}\| \\
& \geq \frac{t\sigma_{x}^{2}}{4}\lambda_{min}(I_{n}+\bar{\mu}_{i,t})-
2\sigma_{x}\sqrt{t}\hat{\mu}(\sqrt{4n}+\sqrt{2\log\frac{2}{\delta}}).
\end{aligned}
\end{equation*}
Hence, when $t\geq (\frac{16\hat{\mu}(\sqrt{4n}+\sqrt{2\log\frac{2}{\delta}})}{\sigma_{x}})^{2}$, we have 
\begin{equation*}
\begin{aligned}
&\frac{t\sigma_{x}^{2}}{4}\lambda_{min}(I_{n}+\bar{\mu}_{i,t})-
2\sigma_{x}\sqrt{t}\hat{\mu}(\sqrt{4n}+\sqrt{2\log\frac{2}{\delta}})\\
&\geq \frac{t\sigma_{x}^{2}}{8}\lambda_{min}(I_{n}+\bar{\mu}_{i,t}).
\end{aligned}
\end{equation*}
\end{proof}

Next, we will use the following lemma from\cite[Lemma~1]{dean2019sample} to bound the contribution from the noise terms.
\begin{lemma}
Let $f_{i}\in\mathbb{R}^{l}$, $g_{i}\in\mathbb{R}^{n}$ be independent random vectors $f_{i}\sim \mathcal{N}(0,\Sigma_{f})$ and $g_{i}\sim \mathcal{N}(0,\Sigma_{g})$, for $i=1,\cdots,t$. Let $t\geq2(n+l)\log{\frac{1}{\delta}}$. For any fixed $ \delta>0$, we have with probability at least $1-\delta$,
\begin{equation*}
   \|\sum_{i=1}^{t}f_{i}g_{i}^{*}\|\leq4\|\Sigma_{f}\|^{\frac{1}{2}}\|\Sigma_{g}\|^{\frac{1}{2}}\sqrt{t(n+l)\log{\frac{9}{\delta}}}.
\end{equation*}
\label{lemma:upper bound two independent gaussian}
\end{lemma}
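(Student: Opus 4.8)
The plan is to reduce to isotropic Gaussians and then run a covering-number argument over the two spheres. First, whiten: write $f_i=\Sigma_f^{1/2}\tilde f_i$ and $g_i=\Sigma_g^{1/2}\tilde g_i$ with $\tilde f_i\sim\mathcal N(0,I_l)$ and $\tilde g_i\sim\mathcal N(0,I_n)$ all mutually independent. Then $\|\sum_{i=1}^t f_i g_i^*\|=\|\Sigma_f^{1/2}(\sum_{i=1}^t\tilde f_i\tilde g_i^*)\Sigma_g^{1/2}\|\le\|\Sigma_f\|^{1/2}\|\Sigma_g\|^{1/2}\|FG^*\|$, where $F=[\tilde f_1\ \cdots\ \tilde f_t]\in\mathbb R^{l\times t}$ and $G=[\tilde g_1\ \cdots\ \tilde g_t]\in\mathbb R^{n\times t}$ are independent matrices with i.i.d.\ standard normal entries. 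So it suffices to show $\|FG^*\|\le 4\sqrt{t(n+l)\log(9/\delta)}$ with probability at least $1-\delta$ whenever $t\ge 2(n+l)\log(1/\delta)$.

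Second, fix $\tfrac14$-nets $\mathcal U\subset S^{l-1}$ and $\mathcal V\subset S^{n-1}$ with $|\mathcal U|\le 9^l$ and $|\mathcal V|\le 9^n$; the standard discretization bound for operator norms gives $\|FG^*\|\le 2\max_{u\in\mathcal U,\,v\in\mathcal V}|u^*FG^*v|$. For each fixed pair $(u,v)$ the scalar $u^*FG^*v=\sum_{i=1}^t(u^*\tilde f_i)(v^*\tilde g_i)$ is a sum of $t$ i.i.d.\ products of two independent $\mathcal N(0,1)$ variables, i.e.\ a mean-zero sub-exponential sum, so Bernstein's inequality yields $\Pr(|u^*FG^*v|>s)\le 2\exp(-c\min(s^2/t,\,s))$ for an absolute constant $c$. (Alternatively, condition on $G$: then $u^*FG^*v\mid G\sim\mathcal N(0,\|G^*v\|^2)$, and combining the Gaussian tail with the bound $\|G\|\lesssim\sqrt t$ from Lemma~\ref{lemma:Bound of unit variance gaussian} gives the same estimate, though with a cruder threshold on $t$.) Taking a union bound over the $\le 9^{n+l}$ pairs and choosing $s$ of order $\sqrt{t\big((n+l)\log 9+\log(1/\delta)\big)}$ makes the failure probability at most $\delta$; the hypothesis $t\ge 2(n+l)\log(1/\delta)$ is precisely what guarantees $s\lesssim t$, so the sub-Gaussian branch $s^2/t$ of the Bernstein bound is the active one and the resulting bound has the advertised form $\sqrt{t(n+l)\log(9/\delta)}$ after collecting constants — folding $9^{n+l}$ into the logarithm is what turns $\log(1/\delta)$ into $\log(9/\delta)$.

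The main obstacle is the bookkeeping of constants: one must trade the net cardinality $9^{n+l}$ against the tail, verify that under the stated sample threshold one never leaves the sub-Gaussian regime of Bernstein's inequality, and check that the leading constant collapses all the way to $4$ once the factor $2$ from the net, the Bernstein constant, and the whitening factors are combined. Since this lemma is quoted verbatim from \cite[Lemma~1]{dean2019sample}, in the paper we simply invoke it; the sketch above records why it holds.
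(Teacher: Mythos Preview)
Your proposal is correct and matches the paper's treatment: the paper does not prove this lemma at all but simply quotes it from \cite[Lemma~1]{dean2019sample}, exactly as you note in your final paragraph. The whitening-plus-$\varepsilon$-net-plus-Bernstein sketch you give is indeed the argument behind the cited result, so nothing further is needed.
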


\begin{lemma} \label{lemma:local noise}
Let Assumption \ref{ass:distribution} hold. Fix $\delta >0$ and let $t\geq t_{3}=2(n+l)\log\frac{1}{\delta}$. For any $i\in \mathcal{V}$, we have with probability at least $1-2\delta$,
\begin{equation*}
\begin{aligned}
\|\sum_{j=1}^{t}\eta_{i,j}x_{i,j}^{*}\|&\leq \sqrt{t}\sigma_{\eta}\left(4\sigma_{x}\sqrt{(n+l)\log{\frac{9}{\delta}}}\right.\\
&\quad\quad\quad\quad\quad\left.+\hat{\mu}(\sqrt{2(l+n)}+\sqrt{2\log\frac{2}{\delta}})\right).
\end{aligned}
\end{equation*}
\end{lemma}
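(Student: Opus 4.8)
The plan is to decompose $\sum_{j=1}^{t}\eta_{i,j}x_{i,j}^{*}$ using the substitution $x_{i,j}=\sigma_{x}u_{i,j}+\mu_{i,j}$ with $u_{i,j}\sim\mathcal{N}(0,I_{n})$, exactly as in the proof of Lemma \ref{local x}. This gives
\begin{equation*}
\sum_{j=1}^{t}\eta_{i,j}x_{i,j}^{*}=\sigma_{x}\sum_{j=1}^{t}\eta_{i,j}u_{i,j}^{*}+\sum_{j=1}^{t}\eta_{i,j}\mu_{i,j}^{*},
\end{equation*}
and by the triangle inequality it suffices to bound each term separately and combine with a union bound.

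For the first term, $\sigma_{x}\|\sum_{j=1}^{t}\eta_{i,j}u_{i,j}^{*}\|$, I would apply Lemma \ref{lemma:upper bound two independent gaussian} with $f_{i}=\eta_{i,j}$ (so $\Sigma_{f}=\sigma_{\eta}^{2}I_{l}$, hence $\|\Sigma_{f}\|^{1/2}=\sigma_{\eta}$) and $g_{i}=u_{i,j}$ (so $\Sigma_{g}=I_{n}$, hence $\|\Sigma_{g}\|^{1/2}=1$); these are independent by Assumption \ref{ass:distribution}. Since $t\geq t_{3}=2(n+l)\log\frac{1}{\delta}$, the hypothesis of the lemma is met, and we get $\sigma_{x}\|\sum_{j=1}^{t}\eta_{i,j}u_{i,j}^{*}\|\leq 4\sigma_{x}\sigma_{\eta}\sqrt{t(n+l)\log\frac{9}{\delta}}$ with probability at least $1-\delta$, which is the first summand in the claimed bound (after factoring out $\sqrt{t}\sigma_{\eta}$).

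For the second term, $\|\sum_{j=1}^{t}\eta_{i,j}\mu_{i,j}^{*}\|$, I would write it as $\|[\eta_{i,1}\,\cdots\,\eta_{i,t}][\mu_{i,1}\,\cdots\,\mu_{i,t}]^{*}\|$ and note $[\eta_{i,1}\,\cdots\,\eta_{i,t}]=\sigma_{\eta}Z^{*}$ where $Z\in\mathbb{R}^{t\times l}$ has i.i.d.\ standard normal entries, while $M=[\mu_{i,1}\,\cdots\,\mu_{i,t}]$ satisfies $\|M\|\leq\sqrt{t}\hat{\mu}$. Applying Lemma \ref{lemma:gayssian times constant} with this $M$ and $Z$ (checking the dimension requirement $t\geq l$, which follows from $t\geq t_{3}$), we obtain $\|\sum_{j=1}^{t}\eta_{i,j}\mu_{i,j}^{*}\|=\sigma_{\eta}\|M^{*}Z\|\leq\sigma_{\eta}\sqrt{t}\hat{\mu}(\sqrt{2(n+l)}+\sqrt{2\log\frac{2}{\delta}})$ with probability at least $1-\delta$, which is the second summand. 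A union bound over the two events then yields the stated inequality with probability at least $1-2\delta$.

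The steps here are essentially routine given the available lemmas; the only mild subtlety is bookkeeping — making sure the covariance-norm factors in Lemma \ref{lemma:upper bound two independent gaussian} are identified correctly and that both dimension/sample-size hypotheses ($t\geq 2(n+l)\log\frac{1}{\delta}$ for Lemma \ref{lemma:upper bound two independent gaussian}, and $t\geq l$ for Lemma \ref{lemma:gayssian times constant}) are implied by $t\geq t_{3}$, together with a careful application of the triangle inequality so the two high-probability events combine cleanly. I anticipate no genuine obstacle.
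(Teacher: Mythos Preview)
Your proposal is correct and follows essentially the same approach as the paper: decompose $x_{i,j}$ into its zero-mean Gaussian part and its mean, bound the first piece via Lemma~\ref{lemma:upper bound two independent gaussian} and the second via Lemma~\ref{lemma:gayssian times constant}, then take a union bound. One minor bookkeeping slip: in your application of Lemma~\ref{lemma:gayssian times constant} the tall matrix is $[\mu_{i,1}\cdots\mu_{i,t}]^{*}\in\mathbb{R}^{t\times n}$, so the dimension hypothesis is $t\geq n$ rather than $t\geq l$ (both are implied by $t\geq t_{3}$ for the usual small $\delta$), and the product you bound is really $\sigma_{\eta}\|M Z\|$ with $M=[\mu_{i,1}\cdots\mu_{i,t}]^{*}$; otherwise the argument is identical to the paper's.
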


\begin{proof}
Fixing $i\in\mathcal{V}$, we can rewrite $\eta_{i,j}=\sigma_{\eta}f_{i,j}$ and $x_{i,j}=\sigma_{x}g_{i,j}+\mu_{i,j}$, where $f_{i,j}$, $g_{i,j}$ are independent Gaussian random vectors with $f_{i,j}\sim \mathcal{N}(0,I_{l})$ and $g_{i,j}\sim \mathcal{N}(0,I_{n})$,  for $j=1,\ldots,t$. We have
\begin{equation*}
\begin{aligned}
&\|\sum_{j=1}^{t}\eta_{i,j}x_{i,j}^{*}\|=\|\sum_{j=1}^{t}\sigma_{\eta}f_{i,j}(\sigma_{x}g_{i,j}^{*}+\mu_{i,j}^{*})\|\\
&\leq\|\sum_{j=1}^{t}\sigma_{\eta}\sigma_{x}f_{i,j}g_{i,j}^{*}\|+\|\sum_{j=1}^{t}\sigma_{\eta}f_{i,j}\mu_{i,j}^{*}\|.
\end{aligned}
\end{equation*}
Fixing $\delta>0$ and letting $t\geq 2(n+l)\log\frac{1}{\delta}$, applying Lemma \ref{lemma:upper bound two independent gaussian}, we have with probability at least $1-\delta$
\begin{align}  \label{bound 4}
&\|\sum_{j=1}^{t}\sigma_{\eta}\sigma_{x}f_{i,j}g_{i,j}^{*}\|\leq 4\sigma_{x}\sigma_{\eta}.\sqrt{t(n+l)\log\frac{9}{\delta}}.
\end{align}

Next, notice that 
\begin{equation*}
\begin{aligned}
\|\sum_{j=1}^{t}\sigma_{\eta}f_{i,j}\mu_{i,j}^{*} \|=\sigma_{\eta}\|\begin{bmatrix}\mu_{i,1} \cdots \mu_{i,t} \end{bmatrix}\begin{bmatrix}f_{i,1}\cdots f_{i,t}  \end{bmatrix}^{*} \|.
\end{aligned}
\end{equation*}
Using the fact that $\|\begin{bmatrix}
\mu_{i,1}\cdots\mu_{i,t}\\ \end{bmatrix}\|\leq\sqrt{t}\hat{\mu}$ and applying Lemma \ref{lemma:gayssian times constant}, we have with probability at least $1-\delta$
\begin{align}    \label{bound 5}
&\|\sum_{j=1}^{t}\sigma_{\eta}f_{i,j}\mu_{i,j}^{*}\|\leq  \sigma_{\eta}\sqrt{t}\hat{\mu}(\sqrt{2(n+l)}+\sqrt{2\log\frac{2}{\delta}}).
\end{align}
Applying a union bound over the events in  \eqref{bound 4} and \eqref{bound 5}, we get the desired form.
\end{proof}
\begin{theorem} \label{Thm:local bound}
Let Assumption \ref{ass:distribution} hold. Fix $\delta >0$ and let $t\geq \max({t_{1},t_{2},t_{3}})$, where $t_{1}= 8n+16\log\frac{2}{\delta}, t_{2}=(\frac{16\hat{\mu}(\sqrt{4n}+\sqrt{2\log\frac{2}{\delta}})}{\sigma_{x}})^{2},t_{3}=2(n+l)\log\frac{1}{\delta}$. For any $i\in \mathcal{V}$, letting $\bar{\mu}_{i,t}=\frac{4}{t\sigma_{x}^{2}}\sum_{j=1}^{t}\mu_{i,j}\mu_{i,j}^{*}$, we have with probability at least $1-4\delta$,
\begin{equation}
\begin{aligned}
&\|\hat{\Theta}_{i,t+1}-\Theta\|\leq \frac{C_{1}}{\sqrt{t}\sigma_{x}^{2}\lambda_{min}(I_{n}+\bar{\mu}_{i,t})},
\end{aligned}
\end{equation}
where $C_{1}=8\sigma_{\eta}(4\sigma_{x}\sqrt{(n+l)\log\frac{9}{\delta}}+\hat{\mu}(\sqrt{2(l+n)}+2\sqrt{\log\frac{2}{\delta}}))$.
\end{theorem}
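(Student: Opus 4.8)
The plan is to read off the result by chaining the deterministic decomposition in \eqref{error expression} with Lemma \ref{local x} and Lemma \ref{lemma:local noise}, then taking a union bound. First I would invoke Lemma \ref{local x}: since $t\geq\max(t_1,t_2)$, with probability at least $1-2\delta$ we have $\lambda_{min}(\sum_{j=1}^{t}x_{i,j}x_{i,j}^{*})\geq\frac{t\sigma_x^2}{8}\lambda_{min}(I_n+\bar{\mu}_{i,t})$. Because $\bar{\mu}_{i,t}\succeq 0$, we have $\lambda_{min}(I_n+\bar{\mu}_{i,t})\geq 1>0$, so on this event $\sum_{j=1}^{t}x_{i,j}x_{i,j}^{*}$ is invertible — which retroactively justifies \eqref{optimal_local_solution} and \eqref{error expression} — and $\|(\sum_{j=1}^{t}x_{i,j}x_{i,j}^{*})^{-1}\|\leq\frac{8}{t\sigma_x^2\lambda_{min}(I_n+\bar{\mu}_{i,t})}$.

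Next I would invoke Lemma \ref{lemma:local noise}: since $t\geq t_3$, with probability at least $1-2\delta$ we have $\|\sum_{j=1}^{t}\eta_{i,j}x_{i,j}^{*}\|\leq\sqrt{t}\sigma_{\eta}\bigl(4\sigma_x\sqrt{(n+l)\log\frac{9}{\delta}}+\hat{\mu}(\sqrt{2(l+n)}+\sqrt{2\log\frac{2}{\delta}})\bigr)$. The randomness behind Lemma \ref{local x} comes from two applications of the auxiliary lemmas (one Wishart bound controlling both $\lambda_1$ and $\lambda_{min}$ of the standardized Gram matrix, and one cross-term bound), and the randomness behind Lemma \ref{lemma:local noise} from two further applications; a union bound over all four ``bad'' events gives that every inequality above holds simultaneously with probability at least $1-4\delta$, which requires exactly $t\geq\max(t_1,t_2,t_3)$.

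On this intersection event I would substitute both bounds into the last line of \eqref{error expression}, giving
\begin{equation*}
\|\hat{\Theta}_{i,t+1}-\Theta\|\leq\frac{8\sqrt{t}\sigma_{\eta}\bigl(4\sigma_x\sqrt{(n+l)\log\frac{9}{\delta}}+\hat{\mu}(\sqrt{2(l+n)}+\sqrt{2\log\frac{2}{\delta}})\bigr)}{t\sigma_x^2\lambda_{min}(I_n+\bar{\mu}_{i,t})}.
\end{equation*}
The factor $\sqrt{t}/t$ collapses to $1/\sqrt{t}$, and bounding $\sqrt{2\log\frac{2}{\delta}}\leq 2\sqrt{\log\frac{2}{\delta}}$ in the $\hat{\mu}$ term rewrites the numerator constant as $C_1=8\sigma_{\eta}\bigl(4\sigma_x\sqrt{(n+l)\log\frac{9}{\delta}}+\hat{\mu}(\sqrt{2(l+n)}+2\sqrt{\log\frac{2}{\delta}})\bigr)$, which is precisely the claimed form.

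There is no genuine analytic obstacle here — all the work lives in Lemmas \ref{local x} and \ref{lemma:local noise}. The only thing needing care is the probability accounting: confirming that the four events from the two lemmas are disjoint contributions (so the failure probabilities add to $4\delta$ rather than double-counting), and checking that the lower bound on the Gram matrix both certifies invertibility and is the correct quantity to reciprocate for the spectral norm of the inverse.
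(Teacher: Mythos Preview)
Your proposal is correct and follows exactly the paper's own proof: invoke the eigenvalue lower bound from Lemma~\ref{local x} to control $\|(\sum_{j}x_{i,j}x_{i,j}^{*})^{-1}\|$, invoke Lemma~\ref{lemma:local noise} to control $\|\sum_{j}\eta_{i,j}x_{i,j}^{*}\|$, and take a union bound over the four underlying events to land at $1-4\delta$. Your added observations (that the eigenvalue lower bound certifies invertibility via $\lambda_{min}(I_n+\bar{\mu}_{i,t})\geq 1$, and that $\sqrt{2\log\frac{2}{\delta}}\leq 2\sqrt{\log\frac{2}{\delta}}$ reconciles the constant with the stated $C_1$) are correct elaborations the paper leaves implicit.
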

\begin{proof}
Recall the expression of the estimation error in \eqref{error expression}. Noting that $\|(\sum_{j=1}^{t}x_{i,j}x_{i,j}^{*})^{-1}\|=\frac{1}{\lambda_{min}(\sum_{j=1}^{t}x_{i,j}x_{i,j}^{*})}$, we can combine the second event in Lemma \ref{local x} and the event in Lemma \ref{lemma:local noise} via a union bound to get the desired result.
\end{proof}
\begin{Remark}
Theorem \ref{Thm:local bound} shows that the error is $\mathcal{O}(\frac{1}{\sqrt{t}})$. Note that when the mean $\mu_{i,j}$ is non-zero but invariant for all $t$, the bound could become more conservative when $n>1$ (since that makes $\hat{\mu}$ in $C_{1}$ larger, but we still have $\lambda_{min}(I_{n}+\bar{\mu}_{i,t})=1$). If this is known in advance, one could define a new pair of sequences $\hat{y}_{i,t}=y_{i,2t-1}-y_{i,2t}$ and $\hat{x}_{i,t}=x_{i,2t-1}-x_{i,2t}$. One then has $\hat{y}_{i,t}=\Theta \hat{x}_{i,t}+\hat{\eta}_{i,t}$, where $\hat{x}_{i,t} \sim \mathcal{N}(0,2\sigma_{x}^2I_{n})$ and $\hat{\eta}_{i,t} \sim \mathcal{N}(0,2\sigma_{\eta}^2I_{l})$. The same bound will still apply to the least squares solution using the transformed dataset, i.e., with the price of reducing the amount of samples by one-half, one could force the mean-dependent terms in Theorem \ref{Thm:local bound} to go to zero. Such a transformation could result in a smaller bound when $\hat{\mu}$ is large enough.
\end{Remark}

\subsection{Global Estimation Error}
Next, we look at the estimation error of the least squares estimate supposing that one has access to all samples across the network up to time step $t$. The global estimate and its associated estimation error are
\begin{equation}   \label{optimal_global_solution}
\begin{aligned}
\hat{\Theta}_{t+1}&\triangleq(\sum_{i=1}^{m}\sum_{j=1}^{t}y_{i,j}x_{i,j}^{*})(\sum_{i=1}^{m}\sum_{j=1}^{t}x_{i,j}x_{i,j}^{*})^{-1}\\
&=(\sum_{i=1}^{m}\alpha_{i,t+1})(\sum_{i=1}^{m}\beta_{i,t+1})^{-1},
\end{aligned}
\end{equation}
\begin{equation*}  
\begin{aligned}
\|\hat{\Theta}_{t+1}-\Theta\|=\|(\sum_{i=1}^{m}\sum_{j=1}^{t}\eta_{i,j}x_{i,j}^{*})(\sum_{i=1}^{m}\sum_{j=1}^{t}x_{i,j}x_{i,j}^{*})^{-1}\|,
\end{aligned}
\end{equation*}
assuming the matrix $\sum_{i=1}^{m}\sum_{j=1}^{t}x_{i,j}x_{i,j}^{*}$ is invertible. The proof of the following theorem entirely follows Theorem \ref{Thm:local bound} due to Assumption \ref{ass:distribution}, with slight adjustments to accommodate possibly different means of $x_{i,j}$ across the network.

\begin{theorem} \label{Thm:global error}
Let Assumption \ref{ass:distribution} hold. Fix $\delta >0$, and let $t\geq \frac{1}{m}\max({t_{1},t_{2},t_{3}})$, where $t_{1}= 8n+16\log\frac{2}{\delta}, t_{2}=(\frac{16\hat{\mu}(\sqrt{4n}+\sqrt{2\log\frac{2}{\delta}})}{\sigma_{x}})^{2},t_{3}=2(n+l)\log\frac{1}{\delta}$. Letting $\bar{\mu}_{t}=\frac{4}{mt\sigma_{x}^{2}}\sum_{i=1}^{m}\sum_{j=1}^{t}\mu_{i,j}\mu_{i,j}^{*}$, we have with probability at least $1-4\delta$,
\begin{equation*}  
\begin{aligned}
&\|\hat{\Theta}_{t+1}-\Theta\|\leq \frac{C_{1}}{\sqrt{mt}\sigma_{x}^{2}\lambda_{min}(I_{n}+\bar{\mu}_{t})},
\end{aligned}
\end{equation*}
where $C_{1}$ is defined in Theorem \ref{Thm:local bound}.
\end{theorem}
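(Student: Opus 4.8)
The plan is to deduce Theorem~\ref{Thm:global error} from a ``pooled-sample'' restatement of Theorem~\ref{Thm:local bound}. The key observation is that, under Assumption~\ref{ass:distribution}, the family $\{(x_{i,j},\eta_{i,j}):i\in\mathcal{V},\ 1\le j\le t\}$ consists of $mt$ mutually independent pairs, with each $x_{i,j}\sim\mathcal{N}(\mu_{i,j},\sigma_x^2 I_n)$, each $\eta_{i,j}\sim\mathcal{N}(0,\sigma_\eta^2 I_l)$, $\eta_{i,j}$ independent of $x_{i,j}$, and $\sup_{i,j}\|\mu_{i,j}\|\le\hat\mu$. Inspecting the proofs of Lemma~\ref{local x} and Lemma~\ref{lemma:local noise}, one sees that they use only these facts together with the number of samples; in particular the estimate $\|[\mu_{i,1}\ \cdots\ \mu_{i,t}]\|\le\sqrt{t}\,\hat\mu$ there becomes $\|[\mu_{1,1}\ \cdots\ \mu_{m,t}]\|\le\sqrt{mt}\,\hat\mu$ for the stacked matrix of all network means, again by the definition of $\hat\mu$. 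So I would first re-index the $mt$ samples by a single index and record the pooled versions of Lemma~\ref{local x} and Lemma~\ref{lemma:local noise}, obtained verbatim by replacing $t$ with $mt$ and $\bar\mu_{i,t}$ with $\bar\mu_t=\frac{4}{mt\sigma_x^2}\sum_{i=1}^m\sum_{j=1}^t\mu_{i,j}\mu_{i,j}^*$. The sample-size requirements $t\ge t_1,t_2,t_3$ in those lemmas then read $mt\ge\max(t_1,t_2,t_3)$, which is precisely the hypothesis $t\ge\frac{1}{m}\max(t_1,t_2,t_3)$.

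Next I would reproduce the factorization used in \eqref{error expression}, now for the aggregated sums:
\[
\|\hat\Theta_{t+1}-\Theta\|\le\Big\|\sum_{i=1}^m\sum_{j=1}^t\eta_{i,j}x_{i,j}^*\Big\|\cdot\Big\|\Big(\sum_{i=1}^m\sum_{j=1}^t x_{i,j}x_{i,j}^*\Big)^{-1}\Big\|,
\]
which is valid once $\sum_{i,j}x_{i,j}x_{i,j}^*$ is invertible. The pooled Lemma~\ref{local x} gives, with probability at least $1-2\delta$, $\lambda_{min}\big(\sum_{i,j}x_{i,j}x_{i,j}^*\big)\ge\frac{mt\sigma_x^2}{8}\lambda_{min}(I_n+\bar\mu_t)$, which supplies both the needed invertibility and the bound $\big\|(\sum_{i,j}x_{i,j}x_{i,j}^*)^{-1}\big\|\le\big(\frac{mt\sigma_x^2}{8}\lambda_{min}(I_n+\bar\mu_t)\big)^{-1}$. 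The pooled Lemma~\ref{lemma:local noise} gives, with probability at least $1-2\delta$, $\big\|\sum_{i,j}\eta_{i,j}x_{i,j}^*\big\|\le\sqrt{mt}\,\sigma_\eta\big(4\sigma_x\sqrt{(n+l)\log\frac{9}{\delta}}+\hat\mu(\sqrt{2(l+n)}+\sqrt{2\log\frac{2}{\delta}})\big)$.

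Finally I would take a union bound over these two events and multiply: with probability at least $1-4\delta$,
\[
\|\hat\Theta_{t+1}-\Theta\|\le\frac{8\sigma_\eta\big(4\sigma_x\sqrt{(n+l)\log\frac{9}{\delta}}+\hat\mu(\sqrt{2(l+n)}+\sqrt{2\log\frac{2}{\delta}})\big)}{\sqrt{mt}\,\sigma_x^2\,\lambda_{min}(I_n+\bar\mu_t)}\le\frac{C_1}{\sqrt{mt}\,\sigma_x^2\,\lambda_{min}(I_n+\bar\mu_t)},
\]
using $\sqrt{2\log\frac{2}{\delta}}\le 2\sqrt{\log\frac{2}{\delta}}$ and the definition of $C_1$ from Theorem~\ref{Thm:local bound}, which is the claimed inequality. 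I do not anticipate a genuine obstacle: the whole content is the bookkeeping that pooling $m$ agents' data multiplies the effective sample size by $m$ --- hence the $\sqrt{mt}$ in the final bound and the $\frac{1}{m}$ in the sample-size threshold --- while $\sigma_x,\sigma_\eta,\hat\mu$, and therefore $C_1$, are unaffected. The one point deserving an explicit line of justification is that the uniform bound $\hat\mu$ still controls the stacked mean matrix taken across all agents, which holds by the definition of $\hat\mu$ in Assumption~\ref{ass:distribution}.
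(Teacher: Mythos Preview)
Your proposal is correct and takes essentially the same approach as the paper, which simply states that the proof ``entirely follows Theorem~\ref{Thm:local bound} due to Assumption~\ref{ass:distribution}, with slight adjustments to accommodate possibly different means of $x_{i,j}$ across the network.'' Your write-up makes those slight adjustments explicit---the re-indexing of the $mt$ pooled samples, the replacement of $\bar{\mu}_{i,t}$ by $\bar{\mu}_t$, and the observation that the uniform bound $\hat{\mu}$ from Assumption~\ref{ass:distribution} still controls the stacked mean matrix---and this is exactly what the paper has in mind.
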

\begin{Remark}
Theorem \ref{Thm:global error} indicates that the global estimation error bound is approximately $\frac{1}{\sqrt{m}}$ of the local estimation error bound for agent $i\in\mathcal{V}$ in Theorem \ref{Thm:local bound} (when $\bar{\mu}_{t}$ in Theorem \ref{Thm:global error} is approximately equal to $\bar{\mu}_{i,t}$ in Theorem \ref{Thm:local bound}). Next, we will analyze the local estimation error after finite communication steps, which shows how communication could help agents benefit from the global dataset.  
\end{Remark}

\subsection{Local Estimation Error After Communication}
To derive the error bound of the local estimate after communication, we first define some quantities for notational simplicity. Recall the roles of the stopping time $S$ and the communication period $\zeta$ in Algorithm \ref{algo1}. For $t$ satisfying $t\bmod{\zeta}=0$ and $t\leq S$ (note that we will only consider such $t$ in this section), define $\bar{\alpha}_{t+1}\triangleq \frac{1}{m}\sum_{i=1}^{m}\alpha_{i,t+1}$ and $\bar{\beta}_{t+1} \triangleq \frac{1}{m}\sum_{i=1}^{m}\beta_{i,t+1}$. For any $i\in \mathcal{V}$, note that
\begin{equation}  \label{Communicated error}
\begin{aligned}
&\|\bar{\Theta}_{i,t+1}-\Theta\|=\|\alpha_{i,t+1}^{T}(\beta_{i,t+1}^{T})^{-1}-\Theta\|\\
&=\|\alpha_{i,t+1}^{T}(\beta_{i,t+1}^{T})^{-1}-\bar{\alpha}_{t+1}\bar{\beta}_{t+1}^{-1}+\bar{\alpha}_{t+1}\bar{\beta}_{t+1}^{-1}-\Theta\|\\
&\leq \|\alpha_{i,t+1}^{T}(\beta_{i,t+1}^{T})^{-1}-\bar{\alpha}_{t+1}\bar{\beta}_{t+1}^{-1}\|+\|\bar{\alpha}_{t+1}\bar{\beta}_{t+1}^{-1}-\Theta\|,
\end{aligned}
\end{equation}
under the invertibility assumption.

The second portion of the above inequality can be bounded using Theorem \ref{Thm:global error}, since $\|\bar{\alpha}_{t+1}\bar{\beta}_{t+1}^{-1}-\Theta\|=\|\hat{\Theta}_{t+1}-\Theta\|$. Now we will focus on bounding the first term, which corresponds to the error due to network convergence at time step $t$, using $T$ steps of communication. For $t$ satisfying $t\bmod{\zeta}=0$ and $t\leq S$, fixing $i\in \mathcal{V}$ and defining $\epsilon_{\bar{\alpha}_{i,t+1}^{T}}\triangleq\alpha_{i,t+1}^{T}-\bar{\alpha}_{t+1}$ and $\epsilon_{(\beta_{i,t+1}^{T})^{-1}}\triangleq(\beta_{i,t+1}^{T})^{-1}-\bar{\beta}_{t+1}^{-1}$, we have
\begin{equation}  \label{decomposition of error}
\begin{aligned}
&\|\alpha_{i,t+1}^{T}(\beta_{i,t+1}^{T})^{-1}-\bar{\alpha}_{t+1}\bar{\beta}_{t+1}^{-1}\|=\\
&\|(\bar{\alpha}_{t+1}+\epsilon_{\bar{\alpha}_{i,t+1}^{T}})(\bar{\beta}_{t+1}^{-1}+\epsilon_{(\beta_{i,t+1}^{T})^{-1}})-\bar{\alpha}_{t+1}\bar{\beta}_{t+1}^{-1}\|\\
&\leq \|\bar{\alpha}_{t+1}\epsilon_{(\beta_{i,t+1}^{T})^{-1}}\|+\|\epsilon_{\bar{\alpha}_{i,t+1}^{T}}\bar{\beta}_{t+1}^{-1}\|+\|\epsilon_{\bar{\alpha}_{i,t+1}^{T}}\epsilon_{(\beta_{i,t+1}^{T})^{-1}}\| \\
&\leq\|\bar{\alpha}_{t+1}\|\|\epsilon_{(\beta_{i,t+1}^{T})^{-1}}\|+\|\epsilon_{\bar{\alpha}_{i,t+1}^{T}}\|\|\bar{\beta}_{t+1}^{-1}\|\\
&\qquad +\|\epsilon_{\bar{\alpha}_{i,t+1}^{T}}\|\|\epsilon_{(\beta_{i,t+1}^{T})^{-1}}\|,
\end{aligned}
\end{equation}
and we will bound the above terms separately.

Before we proceed, we will define some probabilistic events. Let $t$ satisfy $t\bmod{\zeta}=0$ and $t\leq S$. Fix $\hat{\delta} >0$. With the replacement of $\delta$ by $\hat{\delta}$, let $t\geq \max({t_{1},t_{2},t_{3}})$ defined in Lemma \ref{local x} and Lemma \ref{lemma:local noise}. Let $E_{1}$ be the event such that the event in Lemma \ref{local x} occurs for all $i\in\mathcal{V}$ at time step $t$, i.e., 
\begin{equation*}  
\begin{aligned}
E_{1}&\triangleq \bigcap_{i=1}^{m} \biggl\{\{\|\sum_{j=1}^{t}x_{i,j}x_{i,j}^{*}\|\leq t(\frac{19}{8}\sigma_{x}^{2}+\hat{\mu}^{2})\} \cap\\
&\{\lambda_{min}(\sum_{j=1}^{t}x_{i,j}x_{i,j}^{*})\geq \frac{t\sigma_{x}^{2}}{8}\lambda_{min}(I_{n}+\bar{\mu}_{i,t})\}\biggr\},
\end{aligned}
\end{equation*}
where $\bar{\mu}_{i,t}=\frac{4}{t\sigma_{x}^{2}}\sum_{j=1}^{t}\mu_{i,j}\mu_{i,j}^{*}$. Similarly, let $E_{2}$ be the event such that the event in Lemma \ref{lemma:local noise} occurs for all $i\in\mathcal{V}$ at time step $t$, i.e., 
\begin{equation*}  
\begin{aligned}
&E_{2}\triangleq \bigcap_{i=1}^{m} \biggl \{\|\sum_{j=1}^{t}\eta_{i,j}x_{i,j}^{*}\|\leq\\
&\sqrt{t}\sigma_{\eta}(4\sigma_{x}\sqrt{(n+l)\log{\frac{9}{\hat{\delta}}}}+
\hat{\mu}(\sqrt{2(l+n)}+\sqrt{2\log\frac{2}{\hat{\delta}}}))\biggr\}.
\end{aligned}
\end{equation*}
Applying a union bound over all $i\in\mathcal{V}$, we have
\begin{equation}  \label{E3}
\begin{aligned}
E_{3}\triangleq E_{1}\cap E_{2}
\end{aligned}
\end{equation}
occurs with probability at least $1-4m\hat{\delta}$.
\begin{proposition} \label{a}
Conditioning on event $E_{3}$ in \eqref{E3}, we have
\begin{equation*}  
\begin{aligned}
&\|\bar{\alpha}_{t+1}\|\leq t c_{1}+\sqrt{t}c_{2}, 
\end{aligned}
\end{equation*}
where $c_{1}\triangleq \|\Theta\|(\frac{19}{8}\sigma_{x}^{2}+\hat{\mu}^{2})$,  $c_{2}\triangleq \sigma_{\eta}(4\sigma_{x}\sqrt{(n+l)\log{\frac{9}{\hat{\delta}}}}+\hat{\mu}(\sqrt{2(l+n)}+\sqrt{2\log\frac{2}{\hat{\delta}}}))$.
\end{proposition}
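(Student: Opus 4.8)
The plan is to unwind the definition of $\bar{\alpha}_{t+1}$ using the data-generating model and then apply the two deterministic bounds that are bundled into the event $E_{3}$. Recall that $\bar{\alpha}_{t+1}=\frac{1}{m}\sum_{i=1}^{m}\alpha_{i,t+1}$ with $\alpha_{i,t+1}=\sum_{j=1}^{t}y_{i,j}x_{i,j}^{*}$. Substituting the model \eqref{eq:True system}, i.e.\ $y_{i,j}=\Theta x_{i,j}+\eta_{i,j}$, gives the clean decomposition
\begin{equation*}
\alpha_{i,t+1}=\Theta\Bigl(\sum_{j=1}^{t}x_{i,j}x_{i,j}^{*}\Bigr)+\sum_{j=1}^{t}\eta_{i,j}x_{i,j}^{*},
\end{equation*}
so that $\bar{\alpha}_{t+1}=\frac{1}{m}\sum_{i=1}^{m}\bigl[\Theta(\sum_{j=1}^{t}x_{i,j}x_{i,j}^{*})+\sum_{j=1}^{t}\eta_{i,j}x_{i,j}^{*}\bigr]$.

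Next I would apply the triangle inequality and submultiplicativity of the spectral norm to obtain
\begin{equation*}
\|\bar{\alpha}_{t+1}\|\leq\frac{1}{m}\sum_{i=1}^{m}\Bigl(\|\Theta\|\,\bigl\|\textstyle\sum_{j=1}^{t}x_{i,j}x_{i,j}^{*}\bigr\|+\bigl\|\textstyle\sum_{j=1}^{t}\eta_{i,j}x_{i,j}^{*}\bigr\|\Bigr).
\end{equation*}
Now I invoke $E_{3}=E_{1}\cap E_{2}$: on $E_{1}$ we have $\|\sum_{j=1}^{t}x_{i,j}x_{i,j}^{*}\|\leq t(\tfrac{19}{8}\sigma_{x}^{2}+\hat{\mu}^{2})$ for every $i\in\mathcal{V}$ simultaneously, and on $E_{2}$ we have $\|\sum_{j=1}^{t}\eta_{i,j}x_{i,j}^{*}\|\leq\sqrt{t}\,c_{2}$ for every $i\in\mathcal{V}$ simultaneously (this is exactly the reason $E_{1},E_{2}$ were defined as intersections over all agents). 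Plugging these uniform bounds into each summand makes the summand independent of $i$, so averaging over $i$ yields $\|\bar{\alpha}_{t+1}\|\leq\|\Theta\|\,t(\tfrac{19}{8}\sigma_{x}^{2}+\hat{\mu}^{2})+\sqrt{t}\,c_{2}=tc_{1}+\sqrt{t}\,c_{2}$, which is the claim.

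There is essentially no hard step here; the whole argument is a deterministic manipulation once one conditions on $E_{3}$, since all the probabilistic work has already been done in Lemma \ref{local x} and Lemma \ref{lemma:local noise}. The only point that warrants a sentence of care is that the per-agent bounds must hold for all $i$ at once so that the average does not lose anything, and this is guaranteed by construction of $E_{1}$ and $E_{2}$; the union-bound cost $1-4m\hat{\delta}$ has already been accounted for in \eqref{E3} and does not re-enter the proof of this proposition.
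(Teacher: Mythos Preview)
Your proof is correct and follows essentially the same approach as the paper: decompose $\alpha_{i,t+1}$ via the model $y_{i,j}=\Theta x_{i,j}+\eta_{i,j}$, apply triangle inequality and submultiplicativity, and invoke the per-agent bounds in $E_{3}$. The only cosmetic difference is that the paper keeps the sum over $i$ inside the norm one step longer (writing $\frac{1}{m}\bigl(\|\Theta\|\|\sum_{i}\sum_{j}x_{i,j}x_{i,j}^{*}\|+\|\sum_{i}\sum_{j}\eta_{i,j}x_{i,j}^{*}\|\bigr)$) before invoking $E_{3}$, whereas you pull the sum over $i$ outside first; both routes land on exactly the same bound.
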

\begin{proof}
We have
\begin{equation*}  
\begin{aligned}
\|\bar{\alpha}_{t+1}\|&=\|\frac{1}{m}\sum_{i=1}^{m}\alpha_{i,t+1}\|=\|\frac{1}{m}\sum_{i=1}^{m}\sum_{j=1}^{t}y_{i,j}x_{i,j}^{*}\|\\
&=\|\frac{1}{m}\sum_{i=1}^{m}\sum_{j=1}^{t}(\Theta x_{i,j}+\eta_{i,j})x_{i,j}^{*}\| \\
&\leq\frac{1}{m}(\|\Theta\|\|\sum_{i=1}^{m}\sum_{j=1}^{t}x_{i,j}x_{i,j}^{*}\|+\|\sum_{i=1}^{m}\sum_{j=1}^{t}\eta_{i,j}x_{i,j}^{*}\|)\\
&\leq t\|\Theta\|(\frac{19}{8}\sigma_{x}^{2}+\hat{\mu}^{2})+\sqrt{t}\sigma_{\eta}(4\sigma_{x}\sqrt{(n+l)\log{\frac{9}{\delta}}}\\
&\quad\quad\quad\quad\quad\quad\quad+\hat{\mu}(\sqrt{2(l+n)}+\sqrt{2\log\frac{2}{\delta}})),
\end{aligned}
\end{equation*}
where the last inequality is due to event $E_{3}$.
\end{proof}

We will use the following result from \cite{diaconis1991geometric}.
\begin{lemma} \label{lemma:doubly stochastic}
Consider a weight matrix $W\in\mathbb{R}^{m \times m}$ that satisfies Assumption \ref{ass:topology}. The following inequality holds:
\begin{equation*}
\begin{aligned}
\max_{i\in\{1,\cdots,m\}} \sum_{j\in \{1,\cdots,m\}}\left|W^{T}(i,j)-\frac{1}{m}\right|\leq \sqrt{m}(\rho(W))^{T},
\end{aligned}
\end{equation*}
where $\rho(W)=\max\{\lambda_{2}(W),-\lambda_{m}(W)\}$.
\end{lemma}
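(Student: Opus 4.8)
The statement to prove is Lemma~\ref{lemma:doubly stochastic}, which is actually just quoted from \cite{diaconis1991geometric}, so my ``proof'' should really be a sketch of how such a bound is established from the spectral properties of $W$ in Assumption~\ref{ass:topology}.

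\textbf{Plan of the proof.} The plan is to exploit the fact that $W$ is symmetric, doubly stochastic, and has second-largest-magnitude eigenvalue $\rho(W)<1$, so that powers of $W$ converge geometrically to the rank-one averaging matrix $J \triangleq \frac{1}{m}\mathbf{1}_m\mathbf{1}_m^{*}$. First I would write the spectral decomposition $W = \sum_{k=1}^{m}\lambda_k(W)\, v_k v_k^{*}$ with orthonormal eigenvectors $v_k$, noting that $W\mathbf{1}_m = \mathbf{1}_m$ forces $\lambda_1(W)=1$ with eigenvector $v_1 = \mathbf{1}_m/\sqrt{m}$, so that $v_1 v_1^{*} = J$. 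Then $W^{T} - J = \sum_{k=2}^{m}\lambda_k(W)^{T} v_k v_k^{*}$, and since the spectral norm of this symmetric matrix is $\max_{k\ge 2}|\lambda_k(W)|^{T} = \rho(W)^{T}$, we get $\|W^{T} - J\| \le \rho(W)^{T}$.

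The next step converts this spectral-norm bound into the row-wise $\ell_1$ bound in the statement. Fix a row index $i$ and let $e_i$ be the $i$-th standard basis vector; the $i$-th row of $W^{T}-J$ is $e_i^{*}(W^{T}-J)$, whose entries are exactly $W^{T}(i,j) - \frac{1}{m}$. Then $\sum_{j=1}^{m}\bigl|W^{T}(i,j)-\frac{1}{m}\bigr| = \|e_i^{*}(W^{T}-J)\|_1 \le \sqrt{m}\,\|e_i^{*}(W^{T}-J)\|_2 \le \sqrt{m}\,\|W^{T}-J\|\,\|e_i\|_2 = \sqrt{m}\,\rho(W)^{T}$, where the first inequality is the standard $\|\cdot\|_1 \le \sqrt{m}\,\|\cdot\|_2$ relation on $\mathbb{R}^{m}$. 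Taking the maximum over $i$ gives the claimed inequality.

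\textbf{Main obstacle.} There is no real obstacle here, since the result is lifted verbatim from \cite{diaconis1991geometric}; the only point requiring a little care is making sure Assumption~\ref{ass:topology} supplies exactly what is needed — namely symmetry (so $W$ is orthogonally diagonalizable and $\|W^T-J\|$ equals the largest magnitude of the remaining eigenvalues), double stochasticity with the all-ones eigenvector (so the top eigenpair is $(1,\mathbf{1}_m/\sqrt m)$ and $v_1 v_1^{*}=J$), and $\rho(W)<1$ (so the bound is nontrivial and, in the usage to follow, decays in $T$). In the paper's context one simply cites \cite{diaconis1991geometric} and moves on to applying the lemma with $W^{T}$ in place of the exact average when bounding $\|\epsilon_{\bar{\alpha}_{i,t+1}^{T}}\|$ and $\|\epsilon_{(\beta_{i,t+1}^{T})^{-1}}\|$.
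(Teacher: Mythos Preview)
Your sketch is correct. In the paper this lemma is not proved at all; it is simply quoted from \cite{diaconis1991geometric}, so there is no in-paper argument to compare against, and your spectral-decomposition-plus-$\ell_1$/$\ell_2$ conversion is exactly the standard self-contained justification one would give.
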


\begin{proposition} \label{ea}
Let Assumption \ref{ass:topology} hold. Conditioning on event $E_{3}$ in \eqref{E3}, for all $i\in \mathcal{V}$, we have
\begin{equation*}  
\begin{aligned}
&\|\epsilon_{\bar{\alpha}_{i,t+1}^{T}}\|\leq m^{\frac{3}{2}}\sqrt{l}(\rho(W))^{T}(t c_{1}+\sqrt{t}c_{2}),
\end{aligned}
\end{equation*}
where $c_{1}$ and $c_{2}$ are defined in Proposition \ref{a}.
\end{proposition}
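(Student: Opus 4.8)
The plan is to use the linearity of the consensus recursion \eqref{communication} to express $\alpha_{i,t+1}^{T}$ explicitly, and then compare it to $\bar{\alpha}_{t+1}$ entrywise via Lemma \ref{lemma:doubly stochastic}. Since each step of \eqref{communication} replaces $\alpha_{i,t+1}^{k}$ by the $W$-weighted combination $\sum_{j}W(i,j)\alpha_{j,t+1}^{k}$ (with $W(i,j)=0$ for $j\notin\mathcal{N}_{i}\cup\{i\}$ by Assumption \ref{ass:topology}), unrolling $T$ steps and using $\alpha_{j,t+1}^{0}=\alpha_{j,t+1}$ gives $\alpha_{i,t+1}^{T}=\sum_{j=1}^{m}W^{T}(i,j)\,\alpha_{j,t+1}$, where $W^{T}$ is the $T$-th matrix power of $W$. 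Moreover, since $W$ is symmetric and row-stochastic by Assumption \ref{ass:topology}, its column sums are also $1$, so the network average is conserved by the recursion and $\bar{\alpha}_{t+1}=\frac{1}{m}\sum_{j=1}^{m}\alpha_{j,t+1}=\sum_{j=1}^{m}\frac{1}{m}\alpha_{j,t+1}$. Subtracting,
\begin{equation*}
\epsilon_{\bar{\alpha}_{i,t+1}^{T}}=\alpha_{i,t+1}^{T}-\bar{\alpha}_{t+1}=\sum_{j=1}^{m}\Big(W^{T}(i,j)-\tfrac{1}{m}\Big)\alpha_{j,t+1}.
\end{equation*}

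Next I would take norms. Passing to the Frobenius norm (to accommodate the matrix-valued summands) and applying the triangle inequality yields $\|\epsilon_{\bar{\alpha}_{i,t+1}^{T}}\|\le\|\epsilon_{\bar{\alpha}_{i,t+1}^{T}}\|_{F}\le\sum_{j=1}^{m}|W^{T}(i,j)-\frac{1}{m}|\,\|\alpha_{j,t+1}\|_{F}$. The remaining task is a uniform bound on $\|\alpha_{j,t+1}\|_{F}$ over all agents on the event $E_{3}$. This is the same computation as in the proof of Proposition \ref{a}: writing $\alpha_{j,t+1}=\Theta\sum_{k=1}^{t}x_{j,k}x_{j,k}^{*}+\sum_{k=1}^{t}\eta_{j,k}x_{j,k}^{*}$ and invoking the events $E_{1}$ and $E_{2}$ (which, being intersections over all $i\in\mathcal{V}$, supply the needed bounds for every $j$), one gets $\|\alpha_{j,t+1}\|\le tc_{1}+\sqrt{t}c_{2}$; since $\alpha_{j,t+1}\in\mathbb{R}^{l\times n}$ has rank at most $l$, this gives $\|\alpha_{j,t+1}\|_{F}\le\sqrt{l}\,(tc_{1}+\sqrt{t}c_{2})$. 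Finally, bounding $|W^{T}(i,j)-\frac{1}{m}|$ uniformly in $j$ by $\sqrt{m}(\rho(W))^{T}$ via Lemma \ref{lemma:doubly stochastic} and summing the $m$ terms gives $\|\epsilon_{\bar{\alpha}_{i,t+1}^{T}}\|\le m\cdot\sqrt{m}(\rho(W))^{T}\cdot\sqrt{l}\,(tc_{1}+\sqrt{t}c_{2})=m^{3/2}\sqrt{l}\,(\rho(W))^{T}(tc_{1}+\sqrt{t}c_{2})$, as claimed.

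I do not anticipate a real obstacle. The two points needing care are (i) justifying $\alpha_{i,t+1}^{T}=\sum_{j}W^{T}(i,j)\alpha_{j,t+1}$ and the invariance of the network average — routine consequences of $W$ being symmetric and stochastic — and (ii) checking that the per-agent bound on $\|\alpha_{j,t+1}\|$ is legitimately available for all $j$ simultaneously, which holds because $E_{1}$ and $E_{2}$ are defined as intersections over $\mathcal{V}$. The constant $m^{3/2}\sqrt{l}$ is somewhat conservative — a tighter accounting (applying $\sum_{j}|W^{T}(i,j)-\frac{1}{m}|\le\sqrt{m}(\rho(W))^{T}$ directly, and using the spectral rather than the Frobenius norm) would yield $\sqrt{m}$ in its place — but the stated inequality follows immediately.
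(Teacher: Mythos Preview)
Your argument is correct and uses the same ingredients as the paper's proof: the explicit representation $\epsilon_{\bar{\alpha}_{i,t+1}^{T}}=\sum_{j}\bigl(W^{T}(i,j)-\tfrac{1}{m}\bigr)\alpha_{j,t+1}$, Lemma~\ref{lemma:doubly stochastic} for the mixing weights, and the per-agent bound $\|\alpha_{j,t+1}\|\le tc_{1}+\sqrt{t}c_{2}$ supplied by $E_{3}$. The paper takes a slightly different technical route: it stacks all $\alpha_{j,t+1}^{T}$ into a block matrix, passes to the Frobenius norm via vectorization, and bounds $\|W^{T}-\bar{W}\|$ (spectral) by $m(\rho(W))^{T}$ through the $1$-norm, then picks up the remaining $\sqrt{ml}$ from $\|[\alpha_{1,t+1}\cdots\alpha_{m,t+1}]\|_{F}\le\sqrt{ml}\max_{j}\|\alpha_{j,t+1}\|$. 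Your direct triangle-inequality argument is more elementary and, as you note, actually proves the sharper bound $\sqrt{ml}\,(\rho(W))^{T}(tc_{1}+\sqrt{t}c_{2})$ before you deliberately loosen it to match the stated $m^{3/2}\sqrt{l}$; the paper's vectorization route lands on $m^{3/2}\sqrt{l}$ without that intermediate step. Both are valid and end at the same inequality.
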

\begin{proof}
Define $A_{t+1}^{T}=\begin{bmatrix}\alpha_{1,t+1}^{T}, \cdots, \alpha_{m,t+1}^{T}\end{bmatrix}$ and $\bar{A}_{t+1}=\begin{bmatrix}\bar{\alpha}_{t+1}, \cdots, \bar{\alpha}_{t+1}\end{bmatrix}$. For all $i\in\mathcal{V}$, we have
\begin{equation}  
\begin{aligned}
&\|\epsilon_{\bar{\alpha}_{i,t+1}^{T}}\|\leq \|A_{t+1}^{T}-\bar{A}_{t+1}\|, 
\end{aligned}
\end{equation}
since $\epsilon_{\bar{\alpha}_{i,t+1}^{T}}$ is a submatrix of the matrix $A_{t+1}^{T}-\bar{A}_{t+1}$.
Now, let $W\in \mathbb{R}^{m\times m}$ be the weight matrix associated with the communication graph, $\bar{W} \in  \mathbb{R}^{m\times m}$ be a matrix with all components equal to $\frac{1}{m}$, and
\begin{equation*}  
\begin{aligned}
\hat{\alpha}^{i,j}_{t+1}=\begin{bmatrix} \alpha_{1,t+1}(i,j)& \alpha_{2,t+1}(i,j)& \cdots &\alpha_{m,t+1}(i,j) \end{bmatrix}^{*}.
\end{aligned}
\end{equation*} We have
\begin{equation}\label{to use 1}
\begin{aligned} 
&\|A_{t+1}^{T}-\bar{A}_{t+1}\|\leq \|A_{t+1}^{T}-\bar{A}_{t+1}\|_{F}\\
&=\|\vect(A_{t+1}^{T}-\bar{A}_{t+1})\|\\
&=\left\|
\begin{bmatrix}
W^{T}-\bar{W}&0&\cdots&0\\
0&W^{T}-\bar{W}&\cdots&0\\
\vdots&\vdots&\ddots&\vdots&\\
0&0&\cdots& W^{T}-\bar{W}
\end{bmatrix}
\begin{bmatrix}
\hat{\alpha}^{1,1}_{t+1}\\
\hat{\alpha}^{1,2}_{t+1}\\
\vdots\\
\hat{\alpha}^{l,n}_{t+1}
\end{bmatrix}\right\|\\
&\leq\|W^{T}-\bar{W}\|
\left\|\begin{bmatrix}
\alpha_{1,t+1}&\alpha_{2,t+1}&\cdots&\alpha_{m,t+1}
\end{bmatrix}\right\|_{F}.
\end{aligned}
\end{equation}

Now applying Lemma \ref{lemma:doubly stochastic}, we have  
\begin{equation}
\begin{aligned} \label{to use 2}
\|W^{T}-\bar{W}\|&=\|(W^{T})^{*}-\bar{W}^{*}\|\leq \sqrt{m}\|(W^{T})^{*}-\bar{W}^{*}\|_{1}\\
&\leq m (\rho(W))^{T}.
\end{aligned}
\end{equation}
Further, since the rank of the matrix $\begin{bmatrix}
\alpha_{1,t+1}&\alpha_{2,t+1}&\cdots&\alpha_{m,t+1}
\end{bmatrix}$ is at most $l$, conditioning on event $E_{3}$, we have
\begin{equation} \label{to use 3}
\begin{aligned}
&\left\|\begin{bmatrix}
\alpha_{1,t+1}&\alpha_{2,t+1}&\cdots&\alpha_{m,t+1}
\end{bmatrix}\right\|_{F} \\
&\leq\sqrt{l}
\left\|\begin{bmatrix}
\alpha_{1,t+1}&\alpha_{2,t+1}&\cdots&\alpha_{m,t+1}
\end{bmatrix}\right\|\\
&\leq\sqrt{ml} \max_{i\in\{1,\cdots,m\}}(\|\alpha_{i,t+1}\|) \\
&\leq \sqrt{ml} \max_{i\in\{1,\cdots,m\}}(\|\Theta\|\|\sum_{j=1}^{t}x_{i,j}x_{i,j}^{*}\|+\|\sum_{j=1}^{t}\eta_{i,j}x_{i,j}^{*}\|)\\
&\leq\sqrt{ml}(t\|\Theta\|(\frac{19}{8}\sigma_{x}^{2}+\hat{\mu}^{2})\\
&+\sqrt{t}\sigma_{\eta}(4\sigma_{x}\sqrt{(n+l)\log{\frac{9}{\hat{\delta}}}}+\hat{\mu}(\sqrt{2(l+n)}+\sqrt{2\log\frac{2}{\hat{\delta}}})).
\end{aligned}
\end{equation}

The result follows by substituting \eqref{to use 2} and \eqref{to use 3} into \eqref{to use 1}.
\end{proof}

\begin{proposition} \label{b}
Conditioning on event $E_{3}$ in \eqref{E3}, we have
\begin{equation*}  
\begin{aligned}
&\|\bar{\beta}_{t+1}^{-1}\|\leq \frac{8}{\sigma_{x}^{2}t}.
\end{aligned}
\end{equation*}
\end{proposition}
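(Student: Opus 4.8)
The plan is to reduce the spectral norm of $\bar{\beta}_{t+1}^{-1}$ to a lower bound on $\lambda_{min}(\bar{\beta}_{t+1})$, and then obtain that lower bound from the second inequality in event $E_{1}$ (which is implied by $E_{3}$). Since $\bar{\beta}_{t+1}=\frac{1}{m}\sum_{i=1}^{m}\beta_{i,t+1}=\frac{1}{m}\sum_{i=1}^{m}\sum_{j=1}^{t}x_{i,j}x_{i,j}^{*}$ is a sum of positive semidefinite matrices, it is positive semidefinite, hence invertible iff $\lambda_{min}(\bar{\beta}_{t+1})>0$, and $\|\bar{\beta}_{t+1}^{-1}\|=1/\lambda_{min}(\bar{\beta}_{t+1})$.

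First I would note that each $\bar{\mu}_{i,t}=\frac{4}{t\sigma_{x}^{2}}\sum_{j=1}^{t}\mu_{i,j}\mu_{i,j}^{*}\succeq 0$, so $\lambda_{min}(I_{n}+\bar{\mu}_{i,t})\geq 1$ for every $i\in\mathcal{V}$. Then, conditioning on $E_{3}$ (and therefore on $E_{1}$), the second inequality of Lemma \ref{local x} gives, for each $i\in\mathcal{V}$,
\begin{equation*}
\lambda_{min}\Bigl(\sum_{j=1}^{t}x_{i,j}x_{i,j}^{*}\Bigr)\geq \frac{t\sigma_{x}^{2}}{8}\lambda_{min}(I_{n}+\bar{\mu}_{i,t})\geq \frac{t\sigma_{x}^{2}}{8},
\end{equation*}
equivalently $\sum_{j=1}^{t}x_{i,j}x_{i,j}^{*}\succeq \frac{t\sigma_{x}^{2}}{8}I_{n}$.

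Summing these $m$ positive semidefinite lower bounds and dividing by $m$ yields $\bar{\beta}_{t+1}=\frac{1}{m}\sum_{i=1}^{m}\sum_{j=1}^{t}x_{i,j}x_{i,j}^{*}\succeq \frac{t\sigma_{x}^{2}}{8}I_{n}$, so $\lambda_{min}(\bar{\beta}_{t+1})\geq \frac{t\sigma_{x}^{2}}{8}>0$. Inverting, $\|\bar{\beta}_{t+1}^{-1}\|=1/\lambda_{min}(\bar{\beta}_{t+1})\leq \frac{8}{\sigma_{x}^{2}t}$, which is the claim. There is essentially no obstacle here: the only point requiring any care is observing that the per-agent bound from $E_{1}$ is a matrix (Loewner) inequality rather than merely a scalar eigenvalue bound, so it can be averaged over agents while preserving the lower bound on the minimum eigenvalue; everything else is routine.
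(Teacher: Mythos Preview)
Your proof is correct and follows essentially the same approach as the paper: the paper's proof simply states that, conditioning on $E_{3}$, $\bar{\beta}_{t+1}=\frac{1}{m}\sum_{i=1}^{m}\beta_{i,t+1}\succeq \frac{\sigma_{x}^{2}}{8}tI_{n}$ and then inverts. You have merely filled in the justification that $\lambda_{min}(I_{n}+\bar{\mu}_{i,t})\geq 1$ and that the per-agent Loewner lower bounds can be averaged, which the paper leaves implicit.
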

\begin{proof}
Conditioning on event $E_{3}$ in \eqref{E3}, we have
\begin{equation*}  
\begin{aligned}
&\bar{\beta}_{t+1}=\frac{1}{m}\sum_{i=1}^{m}\beta_{i,t+1} \succeq \frac{\sigma_{x}^{2}}{8}t I_{n}.
\end{aligned}
\end{equation*}
Taking the inverse we get the desired result.
\end{proof}

\begin{proposition} \label{eb}
Let Assumption \ref{ass:topology} hold. Conditioning on event $E_{3}$ in \eqref{E3}, for all $i\in \mathcal{V}$, we have
\begin{equation*}  
\begin{aligned}
&\|\epsilon_{(\bar{\beta}_{i,t+1}^{T})^{-1}}\|\leq (\rho(W))^{T}\frac{c_{3}}{t},
\end{aligned}
\end{equation*}
where $c_{3}\triangleq \frac{152m^{\frac{3}{2}}\sqrt{5n}}{\sigma_{x}^{2}}+\frac{64m^{\frac{3}{2}}\sqrt{5n}}{\sigma_{x}^{4}}\hat{\mu}^{2}$.
\end{proposition}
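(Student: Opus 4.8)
The plan is to treat $\epsilon_{(\bar{\beta}_{i,t+1}^{T})^{-1}}=(\beta_{i,t+1}^{T})^{-1}-\bar{\beta}_{t+1}^{-1}$ as the perturbation of a matrix inverse. Using the identity $(\beta_{i,t+1}^{T})^{-1}-\bar{\beta}_{t+1}^{-1}=-(\beta_{i,t+1}^{T})^{-1}(\beta_{i,t+1}^{T}-\bar{\beta}_{t+1})\bar{\beta}_{t+1}^{-1}$ and submultiplicativity of the spectral norm, it is enough to bound the three factors $\|(\beta_{i,t+1}^{T})^{-1}\|$, $\|\beta_{i,t+1}^{T}-\bar{\beta}_{t+1}\|=\|\epsilon_{\bar{\beta}_{i,t+1}^{T}}\|$, and $\|\bar{\beta}_{t+1}^{-1}\|$ separately on the event $E_{3}$ in \eqref{E3}. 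The last factor is already controlled by Proposition \ref{b}, which gives $\|\bar{\beta}_{t+1}^{-1}\|\leq\frac{8}{\sigma_{x}^{2}t}$.

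For the first factor, I would unroll the recursion \eqref{communication} from $\beta_{i,t+1}^{0}=\beta_{i,t+1}$ to write $\beta_{i,t+1}^{T}=\sum_{j=1}^{m}W^{T}(i,j)\beta_{j,t+1}$, where, by Assumption \ref{ass:topology}, the weights $W^{T}(i,j)$ are nonnegative and sum to one. On $E_{3}$, each $\beta_{j,t+1}\succeq\frac{t\sigma_{x}^{2}}{8}\lambda_{min}(I_{n}+\bar{\mu}_{j,t})I_{n}\succeq\frac{t\sigma_{x}^{2}}{8}I_{n}$, since $\bar{\mu}_{j,t}\succeq 0$; a convex combination of matrices each $\succeq\frac{t\sigma_{x}^{2}}{8}I_{n}$ is again $\succeq\frac{t\sigma_{x}^{2}}{8}I_{n}$, so $\beta_{i,t+1}^{T}$ is invertible with $\|(\beta_{i,t+1}^{T})^{-1}\|\leq\frac{8}{\sigma_{x}^{2}t}$. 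Crucially, this does not require $T$ to be large, which matters since the claim is stated for arbitrary $T\geq 1$.

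For the middle factor I would mimic the proof of Proposition \ref{ea} almost verbatim, with the symmetric matrices $\beta_{i,t+1}$ in place of $\alpha_{i,t+1}$: stack the local iterates, express the averaging error blockwise through $W^{T}-\bar{W}$, bound the spectral norm by the Frobenius norm, invoke Lemma \ref{lemma:doubly stochastic} to get $\|W^{T}-\bar{W}\|\leq m(\rho(W))^{T}$, and use that $[\beta_{1,t+1}\ \cdots\ \beta_{m,t+1}]$ has rank at most $n$ to trade its Frobenius norm for $\sqrt{n}$ times its spectral norm, which on $E_{3}$ is at most $\sqrt{m}\max_{j}\|\beta_{j,t+1}\|\leq\sqrt{m}\,t(\tfrac{19}{8}\sigma_{x}^{2}+\hat{\mu}^{2})$ by Lemma \ref{local x}. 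This gives $\|\epsilon_{\bar{\beta}_{i,t+1}^{T}}\|\lesssim m^{3/2}\sqrt{n}\,(\rho(W))^{T}\,t(\tfrac{19}{8}\sigma_{x}^{2}+\hat{\mu}^{2})$, the numerical slack being absorbed into the constant recorded as $c_{3}$.

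Multiplying the three bounds, the product $\frac{8}{\sigma_{x}^{2}t}\cdot m^{3/2}\sqrt{n}(\rho(W))^{T}t(\tfrac{19}{8}\sigma_{x}^{2}+\hat{\mu}^{2})\cdot\frac{8}{\sigma_{x}^{2}t}$ collapses the three powers of $t$ into a single $1/t$, factors out $(\rho(W))^{T}$, and, after collecting the remaining constants (note $64\cdot\tfrac{19}{8}=152$), yields the stated bound with $c_{3}=\frac{152m^{3/2}\sqrt{5n}}{\sigma_{x}^{2}}+\frac{64m^{3/2}\sqrt{5n}}{\sigma_{x}^{4}}\hat{\mu}^{2}$. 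The \emph{main obstacle} is the second step — proving $\beta_{i,t+1}^{T}$ is invertible and bounding $\|(\beta_{i,t+1}^{T})^{-1}\|$ uniformly in $T$ via the doubly-stochastic convex-combination structure — since everything else is the same submultiplicativity and bookkeeping already used in Propositions \ref{a}, \ref{ea}, and \ref{b}.
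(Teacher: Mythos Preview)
Your proof is correct and follows the paper's argument almost step for step --- the bound on $\|\epsilon_{\bar{\beta}_{i,t+1}^{T}}\|$ via Proposition~\ref{ea}'s machinery and the lower bound $\sigma_{min}(\beta_{i,t+1}^{T})\geq\frac{\sigma_{x}^{2}}{8}t$ from the convex-combination structure are exactly what the paper does. The one genuine difference is the perturbation step: the paper invokes Stewart's pseudoinverse perturbation inequality \cite{stewart1977perturbation},
\[
\|(\beta_{i,t+1}^{T})^{-1}-\bar{\beta}_{t+1}^{-1}\|\leq\sqrt{5}\max\Bigl\{\frac{1}{\sigma_{min}^{2}(\beta_{i,t+1}^{T})},\frac{1}{\sigma_{min}^{2}(\bar{\beta}_{t+1})}\Bigr\}\|\epsilon_{\bar{\beta}_{i,t+1}^{T}}\|,
\]
which is where the factor $\sqrt{5}$ in $c_{3}$ comes from. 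You instead use the exact resolvent identity $A^{-1}-B^{-1}=-A^{-1}(A-B)B^{-1}$, valid because on $E_{3}$ both $\beta_{i,t+1}^{T}$ and $\bar{\beta}_{t+1}$ are genuinely invertible. Your route is more elementary and actually yields the sharper constant $\frac{152m^{3/2}\sqrt{n}}{\sigma_{x}^{2}}+\frac{64m^{3/2}\sqrt{n}}{\sigma_{x}^{4}}\hat{\mu}^{2}$ (no $\sqrt{5}$), which you then relax to the paper's $c_{3}$; the paper's use of Stewart's bound is slightly wasteful here but would remain applicable even without invertibility.
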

\begin{proof}
Denote $\epsilon_{\beta_{i,t+1}^{T}}=\beta_{i,t+1}^{T}-\bar{\beta}_{t+1}$. Letting $W\in \mathbb{R}^{m\times m}$ be the weight matrix associated with the communication graph, $\bar{W} \in  \mathbb{R}^{m\times m}$ be the matrix with all components equal to $\frac{1}{m}$.  Following the same procedure as in the proof of Proposition \ref{ea}, for all $i\in \mathcal{V}$, we have
\begin{equation} \label{tmp1}
\begin{aligned}
&\|\epsilon_{\bar{\beta}_{i,t+1}^{T}}\|\\
&\leq\|W^{T}-\bar{W}\|\left\|\begin{bmatrix}
\beta_{1,t+1}&\beta_{2,t+1}&\cdots&\beta_{m,t+1}
\end{bmatrix}\right\|_{F}\\
&\leq m^{\frac{3}{2}}\sqrt{n} (\rho(W))^{T} \max_{i\in\{1,\cdots,m\}}(\|\beta_{i,t+1}\|)\\
&\leq m^{\frac{3}{2}}\sqrt{n} (\rho(W))^{T}t(\frac{19}{8}\sigma_{x}^{2}+\hat{\mu}^{2}),
\end{aligned}
\end{equation}
where the last inequality is due to event $E_{3}$.

Further, for all $i\in \mathcal{V}$, we have
\begin{equation}   \label{tmp2}
\begin{aligned}
&\|\epsilon_{(\bar{\beta}_{i,t+1}^{T})^{-1}}\|=\|(\beta_{i,t+1}^{T})^{-1}-\bar{\beta}_{t+1}^{-1}\| \\
&\leq \sqrt{5}\max\{\frac{1}{\sigma_{min}^{2}(\beta_{i,t+1}^{T})},\frac{1}{\sigma_{min}^{2}(\bar{\beta}_{t+1})}\} \|\epsilon_{\bar{\beta}_{i,t+1}^{T}}\|,
\end{aligned}
\end{equation}
where the inequality comes from \cite{stewart1977perturbation}.

Conditioning on event $E_{3}$ and noting that $\beta_{i,t+1}^{T}=\sum_{i=1}^{m}q_{i}\beta_{i,t+1}$ for some weights $0\leq q_{i}\leq 1$ and $\sum_{i=1}^{m}q_{i}=1$, we have
\begin{equation}  
\begin{aligned}
&\sigma_{min}(\beta_{i,t+1}^{T})=\sigma_{min}(\sum_{i=1}^{m}q_{i}\beta_{i,t+1})\geq \frac{\sigma_{x}^{2}}{8}t.
\end{aligned}
\end{equation}


Consequently, substituting the above inequality and Proposition \ref{b} into \eqref{tmp2}, and combining with \eqref{tmp1}, we obtain
\begin{equation*}  
\begin{aligned}
\|\epsilon_{(\bar{\beta}_{i,t+1}^{T})^{-1}}\|&\leq \frac{64\sqrt{5}}{\sigma_{x}^{4}t^{2}} \|\epsilon_{\bar{\beta}_{i,t+1}^{T}}\|\\
&\leq\frac{64\sqrt{5}}{\sigma_{x}^{4}t^{2}}m^{\frac{3}{2}}\sqrt{n} (\rho(W))^{T}t(\frac{19}{8}\sigma_{x}^{2}+\hat{\mu}^{2}).
\end{aligned}
\end{equation*}
\end{proof}

Now we are ready to bound the local estimation error after communication. 
\begin{theorem} \label{Thm:cimmunicated error}
Let Assumptions \ref{ass:distribution} and \ref{ass:topology} hold. Fix $\hat{\delta}>0$ and let $t\geq \max({t_{1},t_{2},t_{3}})$, where $t_{1}= 8n+16\log\frac{2}{\hat{\delta}}, t_{2}=\left(\frac{16\hat{\mu}(\sqrt{4n}+\sqrt{2\log\frac{2}{\hat{\delta}}})}{\sigma_{x}}\right)^{2}, t_{3}=2(n+l)\log\frac{1}{\hat{\delta}}$. For $t\bmod{\zeta}=0$ and $t\leq S$, fix $\delta>0$ and denote $\bar{\mu}_{t}=\frac{4}{mt\sigma_{x}^{2}}\sum_{i=1}^{m}\sum_{j=1}^{t}\mu_{i,j}\mu_{i,j}^{*}$.  We have with probability at least $1-4m\hat{\delta}-4\delta$,
\begin{equation}\label{Bound TH}
\begin{aligned}
\|\bar{\Theta}_{i,t+1}-\Theta\|&\leq \underbrace{(\rho(W))^{T} C_{0}}_\text{Error due to network convergence} \\
&+\underbrace{\frac{C_{1}}{\sqrt{mt}\sigma_{x}^{2}\lambda_{min}(I_{n}+\bar{\mu}_{t})}}_\text{Error due to noise},
\end{aligned}
\end{equation}
for all $i\in \mathcal{V}$, where $C_{0}=c_{3}(c_{1}+t^{-1/2}c_{2})+\frac{8m^{\frac{3}{2}}\sqrt{l}(c_{1}+t^{-1/2}c_{2})}{\sigma_{x}^{2}}+(\rho(W))^{T}m^{\frac{3}{2}}\sqrt{l}c_{3}(c_{1}+t^{-1/2}c_{2})$, and $C_{1}, c_{1},c_{2},c_{3}$ are defined in Theorem \ref{Thm:local bound}, Proposition \ref{a} and Proposition \ref{eb}.
\end{theorem}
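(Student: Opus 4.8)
The plan is to bound the two pieces of the decomposition \eqref{Communicated error} separately and then assemble the constants already computed in Propositions \ref{a}--\ref{eb} and Theorem \ref{Thm:global error}. First I would record that, on the event $E_{3}$ of \eqref{E3}, every matrix inverted in \eqref{Communicated error} and \eqref{decomposition of error} exists: conditioning on $E_{3}$ forces $\lambda_{min}(\sum_{j=1}^{t}x_{i,j}x_{i,j}^{*})\geq \frac{t\sigma_{x}^{2}}{8}\lambda_{min}(I_{n}+\bar{\mu}_{i,t})>0$ for each $i$, hence $\bar{\beta}_{t+1}=\frac{1}{m}\sum_{i}\beta_{i,t+1}\succeq\frac{\sigma_{x}^{2}}{8}t I_{n}\succ 0$; and since the recursion \eqref{communication} gives $\beta_{i,t+1}^{T}=\sum_{j=1}^{m}W^{T}(i,j)\beta_{j,t+1}$ with $W^{T}$ doubly stochastic, $\beta_{i,t+1}^{T}$ is a convex combination of positive definite matrices, so it is invertible as well. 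This validates both displays and, in particular, lets us identify $\bar{\alpha}_{t+1}\bar{\beta}_{t+1}^{-1}$ with the global least squares estimate $\hat{\Theta}_{t+1}$ of \eqref{optimal_global_solution}.

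For the ``error due to noise'' term I would simply invoke Theorem \ref{Thm:global error} with confidence parameter $\delta$: since $t\geq\max(t_{1},t_{2},t_{3})\geq\frac{1}{m}\max(t_{1},t_{2},t_{3})$, that theorem gives $\|\bar{\alpha}_{t+1}\bar{\beta}_{t+1}^{-1}-\Theta\|\leq\frac{C_{1}}{\sqrt{mt}\sigma_{x}^{2}\lambda_{min}(I_{n}+\bar{\mu}_{t})}$ on an event of probability at least $1-4\delta$. For the ``error due to network convergence'' term I would substitute the four bounds $\|\bar{\alpha}_{t+1}\|\leq tc_{1}+\sqrt{t}c_{2}$ (Proposition \ref{a}), $\|\epsilon_{(\beta_{i,t+1}^{T})^{-1}}\|\leq(\rho(W))^{T}c_{3}/t$ (Proposition \ref{eb}), $\|\epsilon_{\bar{\alpha}_{i,t+1}^{T}}\|\leq m^{3/2}\sqrt{l}(\rho(W))^{T}(tc_{1}+\sqrt{t}c_{2})$ (Proposition \ref{ea}), and $\|\bar{\beta}_{t+1}^{-1}\|\leq 8/(\sigma_{x}^{2}t)$ (Proposition \ref{b}) into the three-term bound \eqref{decomposition of error}. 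Cancelling the powers of $t$ turns the three summands into $(\rho(W))^{T}c_{3}(c_{1}+t^{-1/2}c_{2})$, $(\rho(W))^{T}\frac{8m^{3/2}\sqrt{l}(c_{1}+t^{-1/2}c_{2})}{\sigma_{x}^{2}}$, and $(\rho(W))^{2T}m^{3/2}\sqrt{l}c_{3}(c_{1}+t^{-1/2}c_{2})$, whose sum is precisely $(\rho(W))^{T}C_{0}$. All four propositions hold on $E_{3}$, which occurs with probability at least $1-4m\hat{\delta}$; and since $E_{3}$ is an intersection over all agents and Propositions \ref{ea}, \ref{eb} are stated for every $i$, this bound is simultaneous over $i\in\mathcal{V}$.

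Combining the $E_{3}$ event with the Theorem \ref{Thm:global error} event by a union bound then yields the stated probability $1-4m\hat{\delta}-4\delta$, completing the proof. I do not expect a genuine analytical obstacle here, since the estimates were all proved in the preceding propositions; the only thing requiring care is the bookkeeping of two separate confidence levels --- $\hat{\delta}$, which controls the $4m$ agent-wise Wishart and noise events inside $E_{3}$ (and hence enters $c_{1},c_{2},c_{3}$), and $\delta$, which controls the single network-aggregated noise event of Theorem \ref{Thm:global error} (and hence enters $C_{1}$) --- together with checking that the thresholds $t_{1},t_{2},t_{3}$ as stated (with $\hat{\delta}$) are strong enough to trigger every lemma that $E_{3}$ relies on.
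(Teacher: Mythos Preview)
Your proposal is correct and follows exactly the paper's own argument: decompose via \eqref{Communicated error}--\eqref{decomposition of error}, bound the noise term by Theorem~\ref{Thm:global error}, bound the network-convergence term by plugging Propositions~\ref{a}--\ref{eb} into \eqref{decomposition of error}, and union the $E_{3}$ event (probability $\ge 1-4m\hat{\delta}$) with the Theorem~\ref{Thm:global error} event (probability $\ge 1-4\delta$). Your added checks---invertibility of $\bar{\beta}_{t+1}$ and $\beta_{i,t+1}^{T}$ on $E_{3}$, the explicit $t$-cancellation yielding $C_{0}$, and the separation of the $\hat{\delta}$ versus $\delta$ bookkeeping---are all sound and in fact spell out details the paper leaves implicit.
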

\begin{proof}
Recall the decomposition of error from \eqref{Communicated error} and \eqref{decomposition of error}. Note that the event $E_{3}$ in \eqref{E3} occurs with probability at least $1-4m\hat{\delta}$ when $t\geq \max({t_{1},t_{2},t_{3}})$. Combine event $E_{3}$ and the event in Theorem \ref{Thm:global error} using a union bound. Applying Propositions \ref{a}, \ref{ea}, \ref{b} and \ref{eb}, we get the desired result.
\end{proof}
\begin{Remark}
Theorem \ref{Thm:cimmunicated error} demonstrates a trade-off between estimation error and communication costs. By choosing $\hat{\delta}$ small, as $T$ tends to infinity, the first term in the bound tends to zero, and agent $i\in \mathcal{V}$ can almost recover the same performance guarantee as if it had access to all samples across the network up to time step $t$ (note that the second term in the error bound reduces the local estimation error bound in Theorem \ref{Thm:local bound} by approximately $\frac{1}{\sqrt{m}}$). The speed at which the first term goes to zero depends on the network topology. Further, this result implies that by choosing $T$ large such that the first term in the bound is small, communication becomes less important as $t$ increases (i.e., as each agent keeps collecting samples), since the second term goes to zero more slowly. Consequently, the improvements of the new local estimate after communication over the old estimates $\bar{\Theta}_{i,t+1}$ and $\hat{\Theta}_{i,t+1}$ will become smaller. In the next section, we discuss how to choose those user specified parameters to balance the trade-off between estimation error and communication costs, leveraging the above observation. 
\end{Remark}

\section{Determining the Communication Period, the Stopping Time, and the number of Communication Steps}\label{discussion}
In short, the communication can be stopped when the minimum between the largest local error bound (over agents) in Theorem \ref{Thm:local bound} and the error bound in Theorem \ref{Thm:cimmunicated error} is less than some pre-specified threshold value $\epsilon\in \mathbb{R}_{>0}$. To achieve that, one needs to first specify the communication period $\zeta$. Note that larger $\zeta$ corresponds to sparser communication. Further, one can specify how much error at most due to network convergence in Theorem \ref{Thm:cimmunicated error} (first term in \eqref{Bound TH}) can be tolerated, denoted as $\epsilon_{N}\in\mathbb{R}_{>0}$. Smaller $\epsilon_{N}$ would require more communication steps. Based on that, one can compute the number of communication steps $T$ that makes the error due to network convergence in Theorem \ref{Thm:cimmunicated error} always less than $\epsilon_{N}$. Consequently, one can then evaluate the bounds in Theorem \ref{Thm:local bound} and Theorem \ref{Thm:cimmunicated error} and determine the stopping time $S$. Note that the bounds in Theorem \ref{Thm:local bound} and Theorem \ref{Thm:cimmunicated error} involve parameters that may be unknown in practice. However, it suffices to replace $\bar{\mu}_{t},\bar{\mu}_{i,t}$ by $0$, and $\sigma_{x},\sigma_{\eta}, \hat{\mu}, \|\Theta\|$ by their corresponding estimated upper/lower bounds. 

Although communication could still help to reduce estimation error after $t>S$, even infinite communication steps can only allow each agent to recover the same estimation error bound as if it had access to the global dataset, under which the reduction of error could be negligible in practice when $\|\hat{\Theta}_{i,t+1}-\Theta\|$ or $\|\bar{\Theta}_{i,t+1}-\Theta\|$ is already small enough. Consequently, it might be preferable for these agents to start updating purely based on local data, considering the communication costs. We will illustrate this idea in the next section empirically.
\section{NUMERICAL EXPERIMENT}
In this example, we consider a network of $m=6$ agents trying to learn model \eqref{eq:True system}, where
\begin{equation*} 
\begin{aligned}
\Theta=\begin{bmatrix}
1.6&0.3\\
0.8&0.3\\
\end{bmatrix},
\sigma_{x}=3, \sigma_{\eta}=1,
\end{aligned}
\end{equation*}
and $\mu_{i,t}=0$ for all $i$ and $t$. The weight matrix associated with the communication graph is
\begin{equation*} 
\begin{aligned}
W=\begin{bmatrix}
1/3&1/3&0&0&0&1/3\\
1/3&1/3&1/3&0&0&0\\
0&1/3&1/3&1/3&0&0\\
0&0&1/3&1/3&1/3&0\\
0&0&0&1/3&1/3&1/3\\
1/3&0&0&0&1/3&1/3\\
\end{bmatrix}.
\end{aligned}
\end{equation*}
We set $\zeta=20$ and assume that all parameters in Theorem \ref{Thm:local bound} and Theorem \ref{Thm:cimmunicated error} are known for simplicity. The number of communication steps is set to $T=38$,  which is computed based on the guidelines suggested in Section \ref{discussion} such that the error due to network convergence in Theorem \ref{Thm:cimmunicated error} is always less than $0.01$ (using $\hat{\delta}=0.001$). The communication is stopped when the smallest error bound between the one in Theorem \ref{Thm:local bound} (using $\delta=0.05$) and the one in Theorem \ref{Thm:cimmunicated error} (using $\delta=0.05, \hat{\delta}=0.001$) is less than $0.5$, which leads to $S=1620$. We plot the average (over agents) local estimation error without communication $\|\hat{\Theta}_{i,t+1}-\Theta\|$, the average local estimation error after communication $\|\bar{\Theta}_{i,t+1}-\Theta\|$, and the global estimation error $\|\hat{\Theta}_{t+1}-\Theta\|$. All results are averaged over 10 independent runs.

As expected, the error $\|\bar{\Theta}_{i,t+1}-\Theta\|$ is almost the same as $\|\hat{\Theta}_{t+1}-\Theta\|$ when communication happens. Further, the error $\|\bar{\Theta}_{i,t+1}-\Theta\|$ decreases relatively rapidly, and is much smaller than $\|\hat{\Theta}_{i,t+1}-\Theta\|$ at the beginning. However, the error $\|\bar{\Theta}_{i,t+1}-\Theta\|$ decreases more slowly, and its improvement over $\|\hat{\Theta}_{i,t+1}-\Theta\|$ becomes smaller, as each agent gathers more samples. Although the communication is stopped at $t=1620$, leveraging the global dataset has only marginal improvements over the estimates $\hat{\Theta}_{i,t+1},\bar{\Theta}_{i,t+1}$ after $t=1620$, implying communication becomes less important, which confirms our observation in Theorem \ref{Thm:cimmunicated error}. On the other hand, although the minimum between  $\|\hat{\Theta}_{i,t+1}-\Theta\|$ and $\|\bar{\Theta}_{i,t+1}-\Theta\|$ is less than $0.5$ after $t=1620$, the simulation also implies that our finite time bound is conservative. It is of interest to develop tighter bounds in future work.  

\begin{figure}[ht]
\centering
\includegraphics[width=\linewidth]{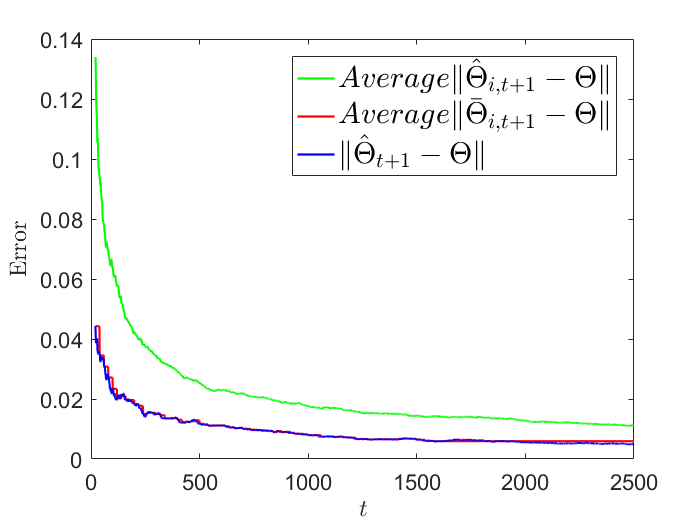}
\caption{Average $\|\hat{\Theta}_{i,t+1}-\Theta\|$, average $\|\bar{\Theta}_{i,t+1}-\Theta\|$, and $\|\hat{\Theta}_{t+1}-\Theta\|$. The communication is stopped after $t=1620$.}
\label{fig: numerical experiment}
\end{figure}

\section{Conclusion and future work} \label{sec: conclusion}
In this paper, we proposed an online distributed parameter estimation algorithm with finite time performance guarantees. Our results demonstrate a trade-off between estimation error and communication costs, and we show that one can leverage the error bounds to determine a time at which the communication can be stopped. We believe our results can be extended to more general graph conditions, e.g., leveraging \cite[Proposition~1]{nedic2009distributed}. Future work could focus on developing similar bounds by using gradient-based methods.




\bibliographystyle{IEEEtran}
\bibliography{main}
\end{document}